\newcommand\buchi{B\"{u}chi}
\newcommand\infi{\mathsf{infin}}
\newcommand\pri{\mathsf{pri}}
\newcommand\host{\mathsf{host}}
\newcommand\os{\mathsf{os}}
\newcommand\pred{\mathsf{pred}}
\newcommand\rename{\mathsf{rename}}
\newcommand\T{\mathcal T}
\newtheorem{theorem}{Theorem}[section]
\newtheorem{lemma}[theorem]{Lemma} 
\newtheorem{corollary}[theorem]{Corollary} 
\newtheorem{definition}[theorem]{Definition}
\newcommand\qed{\hfill $\Box$}
\newenvironment{proof}{\noindent\textbf{Proof. }}{\nopagebreak
 \qed\medskip}
\newenvironment{proofidea}{\noindent\textit{Proof idea. }}{\nopagebreak
  \medskip}
\newcommand\im{\ensuremath{{i_{\min}}}}
\newcommand\nc{\ensuremath{{[c]}}}
\newcommand\opt{\ensuremath{{\mathsf{opt}}}}
\newcommand\p{\ensuremath{{v}}}
\newcommand\reach{\ensuremath{{\mathsf{reach}}}}
\begin{document}

\title{Determinising Parity Automata}
\author{Sven Schewe and Thomas Varghese \\ Department of Computer Science, University of Liverpool}
\date{}

\pagestyle{plain}

\bibliographystyle{plain}

\maketitle

\begin{abstract}
Parity word automata and their determinisation play an important role in automata and game theory.
We discuss a determinisation procedure for nondeterministic parity automata through deterministic Rabin to deterministic parity automata.
We prove that the intermediate determinisation to Rabin automata is optimal.
We show that the resulting determinisation to parity automata is optimal up to a small constant.
Moreover, the lower bound refers to the more liberal Streett acceptance.
We thus show that determinisation to Streett would not lead to better bounds than determinisation to parity.  
As a side-result, this optimality extends to the determinisation of B\"uchi automata.
\end{abstract}

\section{Introduction}

Church's realisability problem \cite{Church/63/Logic} has motivated the development of the beautiful theories of finite games of infinite duration \cite{Buchi/62/Automata,Buchi+Landweber/69/MSO}, and finite automata over infinite structures \cite{Rabin/69/Automata}.
These two fields have often inspired and influenced each other.

The quest for optimal complementation \cite{Vardi/07/Saga,Yan/08/lowerComplexity,Schewe/09/complementation} and determinisation \cite{Rabin/69/Automata,Safra/88/Safra,Piterman/07/Parity,Schewe/09/determinise,Colcombet+Zdanowski/09/Buchi} of nondeterministic automata has been long and fruitful.
The quest for optimal \buchi\ complementation techniques seems to have been settled with matching upper \cite{Schewe/09/complementation} and lower \cite{Yan/08/lowerComplexity} bounds.
A similar observation might, on first glance, be made for \buchi\ determinisation, as matching upper \cite{Schewe/09/determinise} and lower \cite{Colcombet+Zdanowski/09/Buchi} bounds were established shortly after those for complementation.
However, while these bounds are tight to the state, they refer to deterministic Rabin automata only, with an exponential number of Rabin pairs in the states of the initial \buchi\ automaton.

Choosing Rabin automata as targets is not the only natural choice.
The dual acceptance condition, suggested by Streett \cite{Streett/82/Streett}, would be a similarly natural goal, and determinising to parity automata seems to be an even more attractive target, as emptiness games for parity automata \cite{Schewe/07/parity,Paterson+Zwick/08/Parity} have a lower computational complexity compared to emptiness games for Streett or Rabin automata \cite{Piterman+Pnueli/06/Rabin}.
For parity and Streett automata, however, no similarly tight result is known. Indeed, the best known algorithm \cite{Piterman/07/Parity} provides an $O(n!^2)$ bound on the states \cite{Schewe/09/determinise} (for state-based acceptance; the bound can be improved to $O(n!(n-1)!)$ when transition based acceptance is used) of a deterministic parity automaton obtained from a nondeterministic \buchi\ automaton with $n$ states, as compared to the approximately $(1.65n)^n$ states of the smallest deterministic Rabin automaton \cite{Schewe/09/determinise,Colcombet+Zdanowski/09/Buchi}.

Another argument for using parity or Streett conditions is that determinisation constructions are often nested.
E.g., in distributed synthesis \cite{Pnueli+Rosner/90/Distributed,Kupferman+Vardi/01/Synthesizing,Finkbeiner+Schewe/05/Distributed},
several co-determinisation (determinisation of the complement language) steps are used.
Using Rabin automata as a target in one step, one has to use a determinisation technique for Streett automata in the next.
Streett determinisation, however, is significantly more involved and expensive \cite{Safra/92/Streett,Piterman/07/Parity}.

In this paper, we introduce determinisation procedures for nondeterministic parity automata to deterministic Rabin and parity automata.
Using an algorithmic representation that extends the determinisation procedures from \cite{Schewe/09/determinise}, we show that the number of states used in the determinisation of nondeterministic \buchi\ automata cannot be reduced by a single state, while we establish the tightness of our parity determinisation procedure to below a constant factor of $1.5$, even if we allow for Streett acceptance.
This also shows that determinising parity automata to Rabin automata leads to a smaller blow-up than the determinisation to parity or Streett.
As a special case, this holds in particular for B\"uchi automata.

\paragraph*{Transition-based acceptance.}
We use a transition based acceptance mechanism for various reasons.
Transition-based acceptance mechanisms have proven to be a more natural target of automata transformations.
Indeed, all determinisation procedures quoted above have a natural representation with an acceptance condition on transitions, and their translation to state-based acceptance is by multiplying the acceptance from the last transition to the statespace.
A similar observation can be made for other automata transformations, like the removal of $\varepsilon$-transitions from translations of $\mu$-calculi \cite{Wilke/01/Alternating,Schewe+Finkbeiner/06/ATM} and the treatment of asynchronous systems \cite{Schewe+Finkbeiner/06/Asynchronous}, where the statespace grows by multiplication with the acceptance information (e.g., maximal priority on a finite sequence of transitions), while it can only shrink in case of transition based acceptance.
Similarly, tools like \textsc{SPOT} \cite{DuretLutz11} offer more concise automata with transition-based acceptance mechanism as a translation from LTL.
Using state-based acceptance in the automaton that we want to determinise would also complicate the presentation.
But first and foremost, using transition based acceptance provides cleaner results.

\paragraph*{Related work.}
Besides the work on complementing \cite{Vardi/07/Saga,Yan/08/lowerComplexity,Schewe/09/complementation} and determinising \cite{Rabin/69/Automata,Safra/88/Safra,Piterman/07/Parity,Schewe/09/determinise,Colcombet+Zdanowski/09/Buchi} \buchi\ automata, tight bounds have been obtained for generalised \buchi\ automata \cite{Schewe+Varghese/12/GBA}, and specialised algorithms for complementing \cite{Cai+Zhang/11/Streett} and determinising Streett \cite{Safra/92/Streett,Piterman/07/Parity} automata have been studied.

The NP-completeness of minimising deterministic \buchi\ or parity automata \cite{Schewe/10/minimise} suggests that it would be infeasible to look for polynomially bigger automata and to minimise them subsequently, whereas minimising the number of priorities of a deterministic parity automaton is cheap and simple~\cite{Carton+Maceiras/99/ParityIndex}.

The construction of deterministic Co\buchi\ automata with a one-sided error, which is correct for Co\buchi\ recognisable languages \cite{Boker+Kupferman/09/cobuchi}, and decision procedures that use emptiness equivalent B\"uchi~\cite{Kupferman+Vardi/05/Safraless,KPV/06/Safraless}  or  safety \cite{Finkbeiner+Schewe/12/bounded} automata instead of language equivalent automata %with complex acceptance mechanisms
have also been studied.

\section{Preliminaries}

We denote the set of non-negative integers by $\omega$, i.e. $\omega = \{0,1,2,3,...\}$.
For a finite alphabet $\Sigma$, we use $\Sigma^*$ to denote the set of finite sequences over $\Sigma$, $\Sigma^+$ to denote the set of finite non-empty sequences over $\Sigma$, and $\Sigma^{\omega}$ to denote the set of infinite sequences over $\Sigma$.
An infinite \emph{word} $\alpha: \omega \rightarrow \Sigma$ is an infinite sequence of letters $\alpha_0 \alpha_1 \alpha_2 \cdots$ from $\Sigma$.
% $\alpha$ is $\omega$-\emph{regular} if there is a $\mu\in\Sigma^*$ and a $\nu\in\Sigma^+$ such that $\alpha = \mu\nu^\omega$.
We use $[k]$ to represent $\{1,2,\ldots,k\}$.

% \subsection{Automata}
$\omega$-automata are finite automata that are interpreted over infinite words and recognise $\omega$-regular languages $L\subseteq \Sigma^{\omega}$.
Nondeterministic $\omega$-automata are quintuples $\mathcal{N}=(Q,\Sigma,I,T,\mathcal{F})$, where $Q$ is a finite set of states with a non-empty subset $I\subseteq Q$ of initial states, $\Sigma$ is a finite alphabet, $T:Q \times \Sigma \times Q$ is a transition relation that maps states and input letters to sets of successor states, and $\mathcal{F}$ is an acceptance condition.
In this paper, we consider Rabin, Streett, parity, and \buchi\ acceptance.

A \emph{run} $\rho$ of a nondeterministic $\omega$-automaton $\mathcal{N}$ on an input word $\alpha$ is an infinite sequence $\rho: \omega\rightarrow Q$ of states of $\mathcal{N}$, also denoted $\rho = q_0 q_1 q_2\cdots \in Q^{\omega}$, such that the first symbol of $\rho$ is an initial state $q_0\in I$ and, for all $i\in\omega$, $(q_{i},\alpha_{i},q_{i+1})\in T$ is a valid transition.
For a run $\rho$ on a word $\alpha$, we denote with $\overline{\rho}: i \mapsto \big(\rho(i),\alpha(i),\rho(i+1)\big)$ the transitions of $\rho$.
Let $\infi(\rho) = \{q\in Q \mid \forall i \in \omega \; \exists j>i\mbox{ such that } \rho(j)=q\}$ denote the set of all states that occur infinitely often during the run $\rho$.
Likewise, let $\infi(\overline{\rho})=\{t\in T \mid \forall i \in \omega \; \exists j>i\mbox{ such that } \overline{\rho}(j)=t\}$ denote the set of all transitions that are taken infinitely many times in $\rho$.

In this paper, we use acceptance conditions over transitions. Acceptance mechanisms over states can be defined accordingly.
\emph{Rabin} automata are $\omega$-automata, whose acceptance is defined by a family of pairs $\{(A_i,R_i)\mid i \in J\}$, with $A_i,R_i \subseteq T$, of accepting and rejecting transitions for all indices $i$ of some index set $J$.
A run $\rho$ of a Rabin automaton is \emph{accepting} if there is an index $i\in J$, such that infinitely many accepting transitions $t \in A_i$, but only finitely many rejecting transitions $t \in R_j$ occur in $\overline{\rho}$.
That is, if there is an $i\in J$ such that $\infi(\overline{\rho}) \cap A_i \neq \emptyset = \infi(\overline{\rho}) \cap R_i$.
\emph{Streett} automata are $\omega$-automata, whose acceptance is defined by a family of pairs $\{(G_i,B_i)\mid i \in J\}$, with $G_i,B_i \subseteq T$, of good and bad transitions for all indices $i$ of some index set $J$.
A run $\rho$ of a Streett automaton is \emph{accepting} if, for all indices $i\in J$, some good transition $t \in G_i$ or no bad transition $t \in B_j$ occur infinitely often in $\overline{\rho}$.
That is, if, for all $i\in J$, $\infi(\overline{\rho}) \cap G_i \neq \emptyset$ or $\infi(\overline{\rho}) \cap B_i = \emptyset$ holds.

\emph{Parity} automata are $\omega$-automata, whose acceptance is defined by a priority function $\pri:T \rightarrow [c]$ for some $c \in \mathbb N$.
A run $\rho$ of a parity automaton is \emph{accepting} if $\limsup_{n\rightarrow\infty} \pri\big(\overline{\rho}(n)\big)$ is even, that is, if the highest priority that occurs infinitely often is even.
Parity automata can be viewed as special Rabin, or as special Streett automata.
In older works, the parity condition was referred to as Rabin chain condition---because one can represent them by choosing $A_i$ as the set of states with priority $\leq 2i$ and $R_i$ as the sets of states with priorities $\leq 2i -1$, resulting in a chain $A_i \subseteq R_i \subseteq A_{i+1} \subseteq \ldots$---or a Streett chain condition---where $G_i$ is the set of states with priority $\geq 2i$, and $B_i$ is the set of states with priority $\geq 2i-1$.

One-pair Rabin automata $\mathcal R_1 =\big(Q,\Sigma,I,T,(A,R)\big)$, which are Rabin automata with a singleton index set, such that we directly refer to the only pair $(A,R)$, and \emph{B\"uchi} automata, which can be viewed as one-pair Rabin automata with an empty set of rejecting states $R = \emptyset$, are of special technical interest in this paper.
% For \buchi\ automata $\mathcal{B}=(Q,\Sigma,I,T,A)$, the accepting states are often called \emph{final}.

For all types of automata, a word $\alpha$ is accepted by an automaton $\mathcal A$ iff it has an accepting run, and its language $\mathcal{L}(\mathcal A)$ is the set of words it accepts.

We call an automaton $(Q,\Sigma,I,T,\mathcal{F})$ \emph{deterministic} if $I$ is singleton and $T$ contains at most one target node for all pairs of states and input letters, that is, if $(q,\alpha,r),(q,\alpha,s) \in T$ implies $r = s$.
Deterministic automata are denoted $(Q,\Sigma,q_0,\delta,\mathcal{F})$, where $q_0$ is the only initial state and $\delta$ is the partial function with $\delta: (q,\alpha) \mapsto r \Leftrightarrow (q,\alpha,r)\in T$.

As nondeterministic automata can block, we also allow them to accept immediately.
Technically, one can use a state $\top$ which every automaton has.
From $\top$, all transitions go back to $\top$, and sequences that contain one (and thus almost only) $\top$ states are accepting.
This state is not counted to the statespace $Q$.
If we want to include it, we explicitly write $Q^\top$.

\section{Determinisation}

A nondeterministic parity automaton $\mathcal{P}$ is a quintuple $\big(P,\Sigma,I,T,\pri:T \rightarrow [c] \big)$. This NPA has $|P|=n$ states and $c$ priorities (or colours) on the transitions.

We will tackle the determinisation of parity automata in three steps.
Firstly, we will recall history trees, the data structure for determinising \buchi\ automata.
Secondly, we will describe an adjustment of the data structure and a determinisation procedure from \buchi\ automata to one-pair Rabin automata.
Finally, we will show that this data structure can be nested for the determinisation of parity automata.

In \cite{Schewe/09/determinise,Schewe+Varghese/12/GBA}, we use ordered labelled trees to depict the states of the deterministic automaton. 
These ordered labelled trees are called \emph{history trees} in \cite{Schewe/09/determinise,Schewe+Varghese/12/GBA}. 

A \emph{history tree} is an ordered labelled tree $(\T,l)$, where $\T$ is a finite, prefix closed subset of finite sequences of natural numbers $\omega$.
Every element $v\in \T$ is called a \emph{node}.
Prefix closedness implies that, if a node $v=n_{1}\ldots n_{j}n_{j+1} \in \T$ is in $\T$, then $v'=n_{1}\ldots n_{j}$ is also in $\T$.
We call $v'$ the predecessor of $v$, denoted $\pred(v)$.
The empty sequence $\epsilon \in \T$ is called the \emph{root} of the ordered tree $\T$.
Obviously, $\epsilon$ has no predecessor. 

We further require $\T$ to be \emph{order closed} with respect to siblings:
if a node $v=n_{1}\ldots n_{j}$ is in $\T$, then $v'=n_{1}\ldots n_{j-1}i$ is also in $\T$ for all $i\in \omega$ with $i<n_j$.
In this case, we call $v'$ an \emph{older sibling} of $v$ (and $v$ a \emph{younger sibling} of $v'$).
We denote the set of older siblings of $v$ by $\os(v)$.

A history tree is a tree labelled with sets of automata states.
That is, $l:\T\rightarrow 2^{Q} \smallsetminus \{\emptyset\}$ is a labelling function, which maps nodes of $\T$ to non-empty sets of automata states.
For \buchi\ automata, the labelling is subject to the following criteria.
\begin{enumerate}
 \item The label of each node is a subset of the label of its predecessor:
 \hfill
  $l(v)\subseteq l(\pred(v))$ holds for all $\varepsilon \neq v \in \T$.
  
 \item The intersection of the labels of two siblings is disjoint:
  
  $\forall v,v' {\in} \T.\ v {\neq} v' \wedge \pred(v) {=} \pred(v') \Rightarrow l(v) {\cap} l(v') = \emptyset$.
  \item The union of the labels of all siblings is \emph{strictly} contained in the label of their predecessor:
  \hfill
  $\forall v \in \T\ \exists q \in l(v)\ \forall v' \in \T.\ v=\pred(v') \Rightarrow q \notin l(v')$.
\end{enumerate}

\subsection{Determinising one-pair Rabin automata}
\label{Rabindet}

For one-pair Rabin automata, it suffices to adjust this data structure slightly.
A tree is called a \emph{root history tree} (RHT) if it satisfies (1) and (2) from the definition of history trees, and a relaxed version of (3) that allows for non-strict containment of the label of the root,
$\forall v \in \T \mathbf{\mathbf \smallsetminus \{\varepsilon\}}\ \exists q \in l(v)\ \forall v' \in \T.\ v=\pred(v') \Rightarrow q \notin l(v')$, and the label of the root $\varepsilon$ \emph{equals} the union of its children's labels,
$l(\varepsilon) = \bigcup\{l(v) \mid v \in \T \cap \omega\}$. 

Let $\mathcal{R}_1 = (Q,\Sigma,I,T,(A,R))$ be a nondeterministic one-pair Rabin automaton with $|Q| = n$ states. We first construct a language equivalent deterministic Rabin automaton $\mathcal D_1 =(D,\Sigma,d_{0},\Delta,\{(A_i,R_i) \mid i \in J\})$ where,
\begin{itemize}
 \item $D$ is the set of RHTs over $Q$,
 \item $d_{0}$ is the history tree $(\{\varepsilon,0\},\ l:\varepsilon \mapsto I,\ l:0 \mapsto I)$,
 \item $J$ is the set of nodes $\neq \varepsilon$ that occur in some RHT of size $n+1$ (due to the definition of RHTs, an RHT can contain at most $n+1$ nodes), and
 \item for every tree $d\in D$ and letter $\sigma\in\Sigma$, the transition $d'=\Delta(d,\sigma)$ is the result of the sequence of the transition mechanism described below.

 The index set is the set of nodes, and, for each index, the accepting and rejecting sets refer to this node.
\end{itemize}

\subsubsection*{\textbf{Transition mechanism for determinising one-pair Rabin Automata}} 

We determine $\Delta{:}\big((\T,l),\sigma\big) \mapsto (\T',l')$~as~follows:%
\hspace*{-2mm}
\label{TransMech}
\begin{enumerate}
\item \emph{Update of node labels (subset constructions).}
The root of a history tree $d$ collects the momentarily reachable states $Q_r \subseteq Q$ of the automaton $\mathcal{R}_1$.
In the first step of the construction, we update the label of the root to the set of reachable states upon reading a letter $\sigma \in \Sigma$, using the classical subset construction. We update the label of every other node of the RHT $d$ to reflect the successors reachable through accepting or neutral transitions.

For $\varepsilon$, we update $l$ to the function $l_1$ by assigning $l_1: \varepsilon \mapsto \{q' \in Q \mid \exists q \in l(\varepsilon).\ (q,\sigma,q') \in T\}$, and
% \newline
for all $\varepsilon \neq v\in \T$, we update $l$ to the function $l_1$ by assigning $l_1: v \mapsto \{q' \in Q \mid \exists q \in l(v).\ (q,\sigma,q') \in T \smallsetminus R\}$.

\item \emph{Splitting of run threads / spawning new children.}
In this step, we spawn new children for every node in the RHT.
For nodes other than the root $\varepsilon$, we spawn a child labelled with the set of states reachable through accepting transitions; for the root $\varepsilon$, we spawn a child labelled like the root.

Thus, for every node $\varepsilon \neq v\in d$ with $c$ children, we spawn a new child $vc$ and expand $l_1$ to $vc$ by assigning  $l_1: vc \mapsto \{q \in Q \mid \exists q' \in l(v).\ (q',\sigma,q) \in A\}$.
If $\varepsilon$ has $c$ children, we spawn a new child $c$ of the root $\varepsilon$ and expand $l_1$ to $c$ by assigning  $l_1: c \mapsto l_1(\varepsilon)$.
We use $\mathcal \T_n$ to denote the extended tree that includes the new children.
 
\item \emph{Removing states from labels -- horizontal pruning.}
We obtain a function $l_2$ from $l_1$ by removing, for every node $v$ with label $l(v)=Q'$ and all states $q\in Q'$, $q$ from the labels of all younger siblings of $v$ and all of their descendants.
 
\item \emph{Identifying breakpoints -- vertical pruning.}
We denote with $\T_e\subseteq \T_n$ the set of all nodes $v\neq \varepsilon$ whose label $l_2(v)$ is now equal to the union of the labels of its children.
We obtain $\T_v$ from $\T_n$ by removing all descendants of nodes in $\T_e$, and restrict the domain of $l_2$ accordingly.

Nodes in $\T_v \cap \T_e$ represent the breakpoints reached during the infinite run $\rho$ and are called \emph{accepting},
that is, the transition of $\mathcal D_1$ will be in $A_v$ for exactly the $v \in \T_v \cap \T_e$.
Note that the root cannot be accepting.

\item \emph{Removing nodes with empty label.} We denote with $\T_r = \{v \in \T_v \mid l_2(v) \neq \emptyset\}$ the subtree of $\T_v$ that consists of the nodes with non-empty label and restrict the domain of $l_2$ accordingly.

\item \emph{Reordering.}
To repair the orderedness, we call $\|v\| = |\os(v) \cap \T_r|$ the number of (still existing) older siblings of $v$, and map
$v= n_1\ldots n_j$ to %\linebreak 
$v' = \|n_1\|\ \|n_1n_2\|\ \|n_1n_2n_3\|\ldots \|v\|$, denoted $\rename(v)$.

For $\T'=\rename(\T_r)$, we update a pair $(\T_r,l_2)$ from Step 5 to
$d'=\big(\T', l' \big)$ with $l': \rename(v) \mapsto l_2(v)$.

We call a node $v\in \T'\cap \T$ \emph{stable} if $v=\rename(v)$, and we call all nodes in $J$ \emph{rejecting} if they are not stable.
That is, the transition will be in $R_v$ exactly for those $v \in J$, such that $v$ is not a stable node in $\T \cap \T'$.
\end{enumerate}

Note that this construction is a generalisation of the same construction for \buchi\ automata: if $R = \emptyset$, then the label of $0$ is always the label of $\varepsilon$ in this construction, and the node $1$ is not part of any reachable RHT.
(We would merely write $0$ in front of every node of a history tree.)

The correctness proof of this construction follows the same lines as the correctness proof of the \buchi\ construction.
% The details of the proofs of the lemmata are shown in an appendix.

\begin{lemma}
\label{lemma:determ1}
[$L(\mathcal{R}_1) \subseteq L(\mathcal D_1)$] Given that there is an accepting run of $\mathcal{R}_1$ on an $\omega$-word $\alpha$, there is a node $v \in J$ that is eventually always stable and always eventually accepting in the run of $\mathcal D_1$ on $\alpha$.
\end{lemma}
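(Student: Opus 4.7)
The plan is to follow a host-node tracking argument, the standard approach for correctness of Safra-style constructions, adapted to RHTs. I will first fix the accepting run $\rho = q_0 q_1 q_2 \cdots$ and choose a time $N$ so large that $\overline{\rho}(i) \notin R$ for every $i \geq N$ while $\overline{\rho}(i) \in A$ for infinitely many such $i$; this is possible by one-pair Rabin acceptance.

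At each time $i$, I will define the \emph{host} $v_i \in \T_i$ to be the deepest node with $q_i \in l_i(v_i)$. Existence and uniqueness follow from properties~(1) and~(2) of RHTs, which together force the set of nodes whose labels contain $q_i$ to be a chain from the root. I will then trace how $v_i$ evolves through the six substeps of $\Delta$. A neutral transition in $T \smallsetminus R \smallsetminus A$ keeps $q_{i+1}$ inside the updated label of $v_i$ (since non-root labels are refreshed via $T \smallsetminus R$), and horizontal pruning can only shift the host to an older sibling subtree; an $A$-transition additionally allows the host to descend into the freshly spawned child of $v_i$; and a vertical-pruning event at an ancestor $u$ of the host promotes the host to $u$ and marks $u$ accepting.

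The key combinatorial step is to show that $(v_i)_{i \geq N}$ has a well-defined ``limit'' non-root node $v^\ast \in J$. At every fixed depth the number of sibling slots is bounded by $n+1$, so any ancestor of the host can shift leftward (via horizontal pruning) only finitely often; by induction on depth, there will be a time $N^\ast \geq N$ from which no ancestor of $v_i$ is ever renamed and the host never rises strictly above some particular node $v^\ast$. For every $i \geq N^\ast$ the node $v^\ast$ then lies in $\T_i \cap \T_{i+1}$ with $\rename(v^\ast) = v^\ast$, so the transition never lies in $R_{v^\ast}$; this is the first clause. For the second clause, each $A$-transition after $N^\ast$ deposits the $A$-image of states currently hosted at $v^\ast$ into a child of $v^\ast$. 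Because the tree is bounded and $l_2(v^\ast)$ is finite, the labels of the descendants of $v^\ast$ cannot keep accumulating forever without eventually exhausting $l_2(v^\ast)$; at that moment $v^\ast \in \T_v \cap \T_e$, the transition lies in $A_{v^\ast}$, and the descendants of $v^\ast$ are cleared. Since infinitely many $A$-transitions occur after $N^\ast$, this breakpoint event recurs infinitely often, giving the second clause.

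The main obstacle I expect is the bookkeeping across the renaming step: identifying ``the same node $v^\ast$'' across time requires either canonical addresses that are invariant under renaming below a frozen prefix, or a careful induction on depth in which each ancestor of $v^\ast$ is successively shown to stabilise. A second delicate point is the breakpoint argument — verifying that every state hosted strictly below $v^\ast$ really traces back, via non-$R$ successors, to a state that was at $v^\ast$ just before an $A$-transition, so that the labels of $v^\ast$'s children collectively catch up with $l_2(v^\ast)$.
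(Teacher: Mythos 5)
Your overall strategy --- tracking the host node of the accepting run, showing that a prefix of the host sequence stabilises because leftward shifts and deletions are bounded, and concluding that some node $v^\ast$ is eventually always stable --- is the same as the paper's, and the stability half of your argument is sound. The gap is in your argument for ``always eventually accepting''. You claim that because infinitely many $A$-transitions occur, ``the labels of the descendants of $v^\ast$ cannot keep accumulating forever without eventually exhausting $l_2(v^\ast)$'', forcing a breakpoint at $v^\ast$. This is unjustified and false in general: the breakpoint at $v^\ast$ requires \emph{every} state of $l_2(v^\ast)$ to lie in some child, but a state of $l(v^\ast)$ other than the tracked one enters a child only when it itself takes an $A$-transition, and nothing forces this --- a state whose run follows only neutral transitions remains hosted directly at $v^\ast$ forever and blocks the breakpoint indefinitely. (Moreover $l_2(v^\ast)$ is not a fixed finite set being ``exhausted''; it changes at every step.) The delicate point you flag --- that states strictly below $v^\ast$ trace back to $A$-transitions --- is true but beside the point; the obstruction is states that stay \emph{at} $v^\ast$, not those below it.

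The correct way to close this (and what the paper does) is to derive the contradiction not from a breakpoint being forced by label counting, but from the \emph{choice} of $v^\ast$ as the eventual value of the hosts at the minimal recurring depth $s=\liminf_n |v_n|$. If $v^\ast$ were stable but never again accepting, then after the next $A$-transition of $\rho$ the tracked state lands in a child of $v^\ast$, and by the monotonicity you already noted (the child index can only decrease, and children of $v^\ast$ are only wiped when $v^\ast$ is accepting) it remains in some child of $v^\ast$ forever; hence the host depth exceeds $s$ from then on, contradicting that depth $s$ recurs infinitely often. To run this argument you must set up $v^\ast$ so that the host provably \emph{returns to $v^\ast$ itself} infinitely often; your current characterisation of $v^\ast$ as ``a node above which the host never rises'' does not by itself supply that recurrence, so you should replace it with the $\liminf$-depth construction (or prove the recurrence separately from maximality of $v^\ast$ plus the child-index monotonicity).
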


\paragraph*{Notation.}
For a state $q$ of $\mathcal{R}_1$ and an RHT $d=(\T,l)$, we call a node $v$ the \emph{host node of $q$}, denoted $\host(q,d)$, if $q \in l(v)$, but not in $l(vc)$ for any child $vc$ of $v$.

\begin{proofidea} This is the same as for \buchi\ determinisation \cite{Schewe/09/determinise}:
the state of each accepting run is eventually `trapped' in the same node of the RHT, and this node must be accepting infinitely often.
Let $d_0,d_1 \ldots$ be the run of $\mathcal D_1$ on $\alpha$ and $q_0, q_1, \ldots$ an accepting run of $\mathcal{R}_1$ on $\alpha$.
Then we can define a sequence $v_0,v_1, \ldots$ with $v_i = \host(q_i,d_i)$, and there must be a longest eventually stable prefix $v$ in this sequence.

An inductive argument can then be exploited to show that, once this prefix $v$ is henceforth stable, the index $v$ cannot be rejecting.
The assumption that there is a point in time where $v$ is stable but never again accepting can lead to a contradiction.
Once the transition $(q_i,\alpha(i),q_{i+1})$ is accepting, $q_{i+1} \in l_{i+1}(vc)$ for some $c \in \omega$ and for $d_{i+1}=(\T_{i+1},l_{i+1})$.
As $v$ is never again accepting \emph{or} rejecting, we can show for all $j>i$ that, if $q_j \in l_j(vc_j)$, then $q_{j+1} \in l_{j+1}(vc_{j+1})$ for some $c_{j+1} \leq c_j$.
This monotonicity leads to a contradiction with the assumption that $v$ is the \emph{longest} stable prefix.
\end{proofidea}

\begin{proof}
We fix an accepting run $\rho=q_{0}q_{1} \ldots$ of $\mathcal{R}_1$ on an input word $\alpha$, and let $\rho_{\mathcal D_{1}}=d_{0}d_{1}\ldots$ be the run of $\mathcal D_{1}$ on $\alpha$.
We then define the related sequence of host nodes $\vartheta=v_{0}v_{1}v_{2}\ldots=\host(q_0,d_0)\host(q_1,d_1)\host(q_2,d_2)\ldots$.

Let $s=\liminf_{n\rightarrow\infty} |v_{n}|$ be the shortest length of these nodes that occurs infinitely often.
Note that the root cannot be the host node of any state, as it is always labelled by the union of the labels of its children.

We follow the run and argue that the initial sequence of length $s$ of the nodes in $\vartheta$ eventually stabilises.
Let $i_{0} < i_{1} < i_{2} <\ldots$ be an infinite ascending chain of indices such that
\begin{enumerate}
 \item $(q_j,\alpha(j),q_{j+1})\in T \smallsetminus R$ is a neutral or accepting transition for any $j\geq i_0$,
 \item the length $|v_{j}|\geq s$ of the j-th node is not smaller than $s$ for all $j \geq i_{0}$, and
 \item the length $|v_{j}| = s$ is equal to $s$ for all indices $j \in \{i_{0},i_{1},i_{2},\ldots\}$ in this chain.
\end{enumerate}

This implies that $v_{i_{0}}, v_{i_{1}}, v_{i_{2}}, \ldots$ is a descending chain when the single nodes $v_{i}$ are compared by lexicographic order.
As the domain is finite, almost all elements of the descending chain are equal, say $v_{i}:=\pi$.
In particular, $\pi \in J$ is eventually always stable.

Let us assume for contradiction that this stable prefix $\pi$ is accepting only finitely many times.
We choose an index $i$ from the chain $i_0 < i_1 < i_2 <\ldots$ such that
\begin{enumerate}
 \item $\pi$ is stable for all $j\geq i$ and
 \item $\pi$ is not accepting for any $j \geq i$.
\end{enumerate}
Note that $\pi$ is the host of $q_i$ for $d_i$, and $q_j \in l_j(\pi)$ holds for all $j\geq i$.

As $\rho$ is accepting, there is a smallest index $j>i$ such that $(q_{j-1},\alpha(j-1),q_j)\in A$.
Now, as $\pi$ is stable but not accepting for all $k\geq i$ (and hence for all $k \geq j$), $q_k$ must henceforth be in the label of a child of $\pi$ in $d_k$, which contradicts the assumption that infinitely many nodes in $\vartheta$ have length $s = |\pi|$.

Thus, $\pi$ is eventually always stable and always eventually accepting.
\end{proof}

\begin{lemma}
\label{lemma:determ2}
[$L(\mathcal D_1) \subseteq L(\mathcal{R}_1)$]
Given that there is a node $v\in J$, which is eventually always stable and always eventually accepting for an $\omega$-word $\alpha$, then there is an accepting run of $\mathcal{R}_1$ on $\alpha$. 
\end{lemma}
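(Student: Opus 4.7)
The plan is to extract an accepting run of $\mathcal{R}_{1}$ on $\alpha$ by concatenating an arbitrary finite prefix reaching the label of $v$ with an infinite suffix obtained via K\"onig's lemma. The suffix will live inside the labels of $v$, use only $T \smallsetminus R$ transitions and, by exploiting the breakpoint condition, contain at least one $A$-transition between any two consecutive breakpoints.

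First I would fix $i$ such that $v$ is stable in $d_{j}$ for every $j \geq i$, and let $b_{0} < b_{1} < b_{2} < \cdots$ enumerate those indices $\geq i$ for which $v$ is accepting in the transition $d_{b_{m}-1} \to d_{b_{m}}$ (infinitely many by hypothesis). The main technical step is the following \emph{witness lemma}: for every $m \geq 0$ and every $q \in l_{b_{m+1}}(v)$ there is a finite sequence $q^{b_{m}}, q^{b_{m}+1}, \ldots, q^{b_{m+1}} = q$ with $q^{j} \in l_{j}(v)$ for every $j$, $(q^{j},\alpha(j),q^{j+1}) \in T \smallsetminus R$ throughout, and $(q^{t},\alpha(t),q^{t+1}) \in A$ for at least one $t \in [b_{m}, b_{m+1}-1]$. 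To prove it, one unwinds the breakpoint condition $l_{b_{m+1}}(v) = \bigcup_{c} l_{2}(vc)$: since at the previous breakpoint $b_{m}$ step~4 had pruned every descendant of $v$, any child $vc$ contributing to this union must have been spawned at some transition $d_{s} \to d_{s+1}$ with $b_{m} \leq s < b_{m+1}$, and step~2 initialises its label to the $A$-successors of $l_{s}(v)$. Tracing $q$ backwards through $vc$ (whose label can only shrink and only propagates via $T \smallsetminus R$) to some $q^{s+1}$ in that initial label, choosing a predecessor $q^{s} \in l_{s}(v)$ connected by an $A$-transition, and finally tracing $q^{s}$ further backwards through $l(v)$ down to $l_{b_{m}}(v)$ yields the desired sequence. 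This last extension is legitimate because no transition in $(b_{m}, s]$ is a breakpoint for $v$, so $l_{j}(v)$ is non-empty and contained in the $T \smallsetminus R$-image of $l_{j-1}(v)$ throughout this range.

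Granted the witness lemma, I would next apply K\"onig's lemma to the tree whose level-$m$ nodes are finite sequences $\pi = (q_{b_{0}}, q_{b_{0}+1}, \ldots, q_{b_{m}})$ with $q_{j} \in l_{j}(v)$, $(q_{j},\alpha(j),q_{j+1}) \in T \smallsetminus R$, and, for each $s \in [1,m]$, at least one transition $(q_{t},\alpha(t),q_{t+1}) \in A$ with $t \in [b_{s-1}, b_{s}-1]$, ordered by prefix extension. A straightforward induction on $m$, concatenating a level-$m$ sequence ending at $q^{b_{m}}$ with the witness sequence supplied by the lemma, shows that for every $q \in l_{b_{m}}(v)$ some node at level $m$ ends at $q$; in particular every level is non-empty. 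The tree is clearly finitely branching (at most $|Q|^{b_{m+1}-b_{m}}$ children per node), so K\"onig's lemma yields an infinite sequence $q_{b_{0}}, q_{b_{0}+1}, q_{b_{0}+2}, \ldots$ every transition of which lies in $T \smallsetminus R$ and that contains infinitely many $A$-transitions.

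Finally, because $q_{b_{0}} \in l_{b_{0}}(v) \subseteq l_{b_{0}}(\varepsilon)$ is reachable from $I$ by $b_{0}$ transitions of $\mathcal{R}_{1}$ on $\alpha$, prepending any such finite run $q_{0} q_{1} \cdots q_{b_{0}}$ yields a full run $\rho = q_{0} q_{1} q_{2} \cdots$ whose rejecting transitions, if any, are confined to the finite prefix, so $\infi(\overline{\rho}) \cap R = \emptyset$ while $\infi(\overline{\rho}) \cap A \neq \emptyset$; hence $\rho$ is accepting for $\mathcal{R}_{1}$. The principal obstacle will be the witness lemma, which demands a careful reading of the six-step transition mechanism in order to keep simultaneous control over when children of $v$ appear and disappear, how horizontal pruning can shrink descendant labels, and the $T \smallsetminus R$-monotonicity of a child's label throughout the interval between two consecutive breakpoints of~$v$.
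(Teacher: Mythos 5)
Your proposal is correct and follows essentially the same route as the paper: establish that between consecutive accepting breakpoints every state of $l(v)$ is reachable from $l(v)$ at the previous breakpoint via $T\smallsetminus R$ transitions including at least one $A$-transition, then apply K\"onig's lemma to the resulting tree of run prefixes and prepend a finite run from $I$. Your ``witness lemma'' is exactly the relation $l_{i_j}(v) \Rightarrow^{\alpha[i_j,i_{j+1}[} l_{i_{j+1}}(v)$ that the paper asserts ``by construction''; you in fact supply more of the backward-tracing justification (through the spawned child and back to the spawn point) than the paper does.
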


\paragraph*{Notation.} For an $\omega$-word $\alpha$ and $j\geq i$, we denote with $\alpha[i,j[$ the word 
% \mbox{}
$\alpha(i)\alpha(i+1)\alpha(i+2)\ldots\alpha(j-1)$.

We denote with $Q_1 \rightarrow^\alpha Q_2$ for a finite word $\alpha=\alpha_1\ldots\alpha_{j-1}$ that there is, for all $q_j\in Q_2$ a sequence
$q_1\ldots q_j$ with $q_1 \in Q_1$ and $(q_i,\alpha_i,q_{i+1})\in T$ for all $1\leq i <j$.
If, for all $q_j \in Q_2$, there is such a sequence that contains a transition in $A$ but no transition in $R$, we write $Q_1 \Rightarrow^\alpha Q_2$.  

\begin{proofidea} For the run $d_0 d_1 d_2 \ldots$ of $\mathcal D_1$ on $\alpha$, we fix an ascending chain $1 < i_0 < i_1 < i_2 \ldots$ of indices, such that $v$ is not rejecting in any transition $(d_{j-1},\alpha(j-1),d_j)$ for $j \geq i_0$ and such that $(d_{i_j-1},\alpha(i_j-1),d_{i_j}) \in A_v$ for all $j \geq 0$.
The \emph{proof idea} is the usual way of building a tree of initial sequences of runs: we build a tree of initial sequences of runs of $\mathcal{R}_1$ that contains a sequence
$q_0q_1q_2\ldots q_{i_j}$ for any $j\in \omega$ iff
\begin{itemize}
 \item $(q_i,\alpha(i),q_{i+1}) \in T$ is a transition of $\mathcal{R}_1$ for all $i < i_j$,
 \item $(q_i,\alpha(i),q_{i+1}) \notin R$ is not rejecting for all $i \geq i_0 - 1$, and
 \item for all $k<j$ there is an $i\in [i_k,i_{k+1}[$ such that $(q_i,\alpha(i),q_{i+1}) \in A$ is an accepting transition.
\end{itemize}
This infinite tree has an infinite branch by K\"onig's Lemma.
By construction, this branch is an accepting run of $\mathcal{R}_1$ on $\alpha$.
\end{proofidea}

\begin{proof}
Let $\alpha \in L(\mathcal D_1)$.
Then there is a $v$ that is eventually always stable and always eventually accepting in the run $\rho_{\mathcal D_{1}}$ of $\mathcal D_{1}$ on $\alpha$.
We pick such a $v$.

Let $1 < i_{0}<i_{1}<i_{2}<\ldots$ be an infinite ascending chain of indices such that
\begin{itemize}
 \item $v$ is stable for all transitions $(d_{j-1},\alpha(j-1),d_{j})$ with $j\geq i_0$, and
 \item the chain $ i_{0}<i_{1}<i_{2}<\ldots$ contains exactly those indices $i\geq i_0$ such that $(d_{i-1},\alpha(i-1),d_{i})$ is accepting.
\end{itemize}

Let $d_i=(\T_i,l_i)$ for all $i \in \omega$.
By construction, we have
\begin{itemize}
 \item $I \rightarrow^{\alpha[0,i_0[} l_{i_0}(v)$, and
 \item $l_{i_j}(v) \Rightarrow^{\alpha[i_j,i_{j+1}[} l_{i_{j+1}}(v)$.
\end{itemize}

Using this observation, we can build a tree of initial sequences of runs as follows:
we build a tree of initial sequences of runs of $\mathcal{R}_1$ that contains a sequence
$q_0q_1q_2\ldots q_{i_j}$ for any $j\in \omega$ iff
\begin{itemize}
 \item $(q_i,\alpha(i),q_{i+1}) \in T$ is a transition of $\mathcal{R}_1$ for all $i < i_j$,
 \item $(q_i,\alpha(i),q_{i+1}) \notin R$ is not rejecting for all $i \geq i_0 - 1$, and
 \item for all $k<j$ there is an $i\in [i_k,i_{k+1}[$ such that $(q_i,\alpha(i),q_{i+1}) \in A$ is an accepting transition.
\end{itemize}

By construction, this tree has the following properties:
\begin{itemize}
 \item it is infinite,
 \item it is finitely branching,
 \item no branch contains more than $i_0$ rejecting transitions, and,
 \item for all $j \in \omega$, a branch of length $>i_j$ contains at least $j$ accepting transitions.
\end{itemize}

Exploiting K\"onig's lemma, the first two properties provide us with an infinite path, which is a run of $\mathcal{R}_1$ on $\alpha$.
The last two properties then imply that this run is accepting. 
$\alpha$ is therefore in the language of $\mathcal{R}_1$.
\end{proof}

\begin{corollary}
\label{Rabineq}
$L(\mathcal{R}_1) = L(\mathcal D_1)$.
\end{corollary}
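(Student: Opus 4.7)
The plan is to assemble the equality directly from Lemmas~\ref{lemma:determ1} and~\ref{lemma:determ2}, once I unfold what ``eventually always stable and always eventually accepting'' means in terms of the Rabin acceptance condition of $\mathcal D_1$. Concretely, the transition mechanism puts a transition into $R_v$ precisely when the node $v \in J$ fails to be stable (step~6 of the construction), and into $A_v$ precisely when $v$ is accepting, i.e.\ $v \in \T_v \cap \T_e$ (step~4). Hence, along the unique run $\rho_{\mathcal D_1}$ of $\mathcal D_1$ on $\alpha$, the node $v$ is eventually always stable iff $\infi(\overline{\rho_{\mathcal D_1}}) \cap R_v = \emptyset$, and always eventually accepting iff $\infi(\overline{\rho_{\mathcal D_1}}) \cap A_v \neq \emptyset$. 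Combined, these two conditions are exactly the Rabin acceptance clause for the pair $(A_v, R_v)$.

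Therefore, $\alpha \in L(\mathcal D_1)$ iff there exists some $v \in J$ that is eventually always stable and always eventually accepting on $\alpha$. Lemma~\ref{lemma:determ1} supplies the direction $L(\mathcal{R}_1) \subseteq L(\mathcal D_1)$, turning any accepting run of $\mathcal{R}_1$ on $\alpha$ into such a $v$; Lemma~\ref{lemma:determ2} supplies the converse inclusion $L(\mathcal D_1) \subseteq L(\mathcal{R}_1)$, extracting an accepting run of $\mathcal{R}_1$ on $\alpha$ from the existence of such a $v$. I do not anticipate any real obstacle here: all substantive reasoning lives in the two lemmas, and the only thing to verify is that the informal phrases match the $\infi$-based Rabin condition, which is immediate from the definitions in steps~4 and~6.
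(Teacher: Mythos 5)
Your proof is correct and matches the paper's intent exactly: the corollary is simply the conjunction of Lemmas~\ref{lemma:determ1} and~\ref{lemma:determ2}, and your unfolding of ``eventually always stable and always eventually accepting'' into the Rabin condition on the pairs $(A_v,R_v)$ is precisely the bookkeeping that connects the lemma statements to the definition of $L(\mathcal D_1)$.
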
 

\subsubsection*{Estimation of Root History Trees}
Let $\#\mathsf{ht}(n)$ and $\#\mathsf{rht}(n)$ be the number of history trees and RHTs, respectively, over sets with $n$ states.
First, $\#\mathsf{rht}(n) \geq \#\mathsf{ht}(n)$ holds, because the sub-tree rooted in $0$ of an RHT is a history tree.
Second, $\#\mathsf{ht}(n+1) \geq \#\mathsf{rht}(n)$, because adding the additional state to $l(\varepsilon)$ turns an RHT into a history tree.
With an estimation similar to that of history trees \cite{Schewe/09/determinise}, we get:
\begin{theorem}
\label{theo:rht}
$\inf\big\{c \mid \#\mathsf{rht}(n) \in O\big((cn)^n\big)\big\} = \inf\big\{c \mid \#\mathsf{ht}(n) \in O\big((cn)^n\big)\big\} \approx 1.65$.
\end{theorem}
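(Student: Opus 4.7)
The plan is to deduce the theorem from the sandwich $\#\mathsf{ht}(n) \le \#\mathsf{rht}(n) \le \#\mathsf{ht}(n+1)$ already recorded in the paragraph preceding the statement, combined with the known estimate $\inf\{c \mid \#\mathsf{ht}(n) \in O((cn)^n)\} \approx 1.65$ from \cite{Schewe/09/determinise}.

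First I would establish the equality of the two infima. The left inequality in the sandwich immediately yields $\inf\{c \mid \#\mathsf{ht}(n) \in O((cn)^n)\} \le \inf\{c \mid \#\mathsf{rht}(n) \in O((cn)^n)\}$. For the converse, suppose $\#\mathsf{ht}(n) \le K(cn)^n$ for all sufficiently large $n$. The right inequality then gives $\#\mathsf{rht}(n) \le \#\mathsf{ht}(n+1) \le K(c(n+1))^{n+1} = Kc(n+1)(1+1/n)^n (cn)^n$, which is $O(n \cdot (cn)^n)$. For any $c' > c$, the polynomial factor $n$ is dominated by $(c'/c)^n$, so $\#\mathsf{rht}(n) \in O((c'n)^n)$; letting $c' \downarrow c$ yields the reverse inequality between the infima. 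The subtle point is exactly this polynomial absorption, which is what allows the off-by-one shift in $\#\mathsf{ht}(n+1)$ versus $\#\mathsf{rht}(n)$ to wash out asymptotically.

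Second, for the numerical value I would invoke the standard parameterisation from \cite{Schewe/09/determinise}: a history tree on $[n]$ is determined by its number $k$ of nodes, an ordered rooted tree shape on $k$ nodes (a Catalan number of those), a subset of $[n]$ of size $s$ serving as the root label, and a surjective host function from that subset onto the $k$ tree nodes (condition (3) forces surjectivity). Substituting Stirling's approximation into the resulting sum $\sum_{k,s} C_{k-1} \binom{n}{s} k!\, S(s,k)$ and optimising the ratios $k/n$ and $s/n$ pinpoints the maximum and produces an upper bound of the claimed shape with constant $c \approx 1.65$; a matching lower bound is exhibited by an explicit family of history trees realised at the optimising ratios. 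The main obstacle here is this last optimisation, but as it is already carried out in \cite{Schewe/09/determinise} for history trees, I would cite it rather than re-derive it, relying on the sandwich of the second paragraph to transport the estimate to RHTs.
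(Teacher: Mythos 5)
Your proposal is correct and follows essentially the same route as the paper: the paper likewise records the sandwich $\#\mathsf{ht}(n) \le \#\mathsf{rht}(n) \le \#\mathsf{ht}(n+1)$ immediately before the theorem and imports the $\approx 1.65$ estimate for history trees from \cite{Schewe/09/determinise}. Your explicit verification that the off-by-one shift only costs a factor $c(n+1)(1+1/n)^n = O(n)$, which is absorbed by passing to any $c' > c$, is a detail the paper leaves implicit (it instead goes on to bound the ratio $\#\mathsf{rht}(n)/\#\mathsf{ht}(n)$ by a constant, which is a stronger but not needed refinement).
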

% A finer estimation, which shows that $\#\mathsf{rht}(n)$ is only bigger by a constant factor than $\#\mathsf{ht}(n)$, is provided in an appendix.
% \label{RHTestim}
In \cite{Schewe/09/determinise} it was shown that $\#\mathsf{ht}(n)$ grows at a speed, such that $\inf\big\{c \mid \#\mathsf{ht}(n) \in O\big((cn)^n\big)\big\} \approx 1.65$.
We argue that $\#\mathsf{rht}(n)$ does not only grow in the same speed, it even holds that there is only a small constant factor between $\#\mathsf{ht}(n)$ and $\#\mathsf{rht}(n)$.

First, there is obviously a bijection between RHTs over $Q$ and the subset of history trees over $Q \cup \{q_d\}$, where $q_d \notin Q$ is a fresh dummy state, and $q_d$ is the only state that is hosted by the root.
We estimate this size by the number of history trees, where $q_d$ is hosted by the root $\varepsilon$ of the history tree.

To keep the estimation simple, it is easy to see that the share of history trees with $< \frac{1}{3} n$ nodes diminishes to $0$, as the number of trees with $n$ nodes grows much faster than the number of trees with $<\frac{1}{3}n$ nodes and the number of functions from $[n]$ onto $[n]$, $n!$, grows much faster than the functions from $[n]$ to $[\frac{1}{3}n]$.
So we can assume for our estimation that the tree has at least $\frac{1}{3}n$ nodes, such that the share of trees where $q_d$ is in the root is at most $<\frac{3}{n}$.

$\lim_{n \rightarrow \infty} \frac{\#\mathsf{ht}(n+1)}{n \#\mathsf{ht}(n)}$ converges to $(1+\frac{c}{n})^n = e^c$ for $c \approx 1.65$.
Thus, we get the following estimation:

$\lim_{n \rightarrow \infty} \frac{\#\mathsf{rht}(n)}{\#\mathsf{ht}(n)} \leq  \lim_{n \rightarrow \infty} 3\frac{\#\mathsf{ht}(n+1)}{n \#\mathsf{ht}(n)} 3e^c<3 e^c$.

\subsection{Determinising parity automata}
\label{paritydet}
Having outlined a determinisation construction for one-pair Rabin automata using root history trees, we proceed to define  \emph{nested history trees} (NHTs), the data structure we use for determinising parity automata.

We assume that we have a parity automaton
$\mathcal P = (Q,\Sigma,I,T,\pri:T \rightarrow [c])$,
and we select $e = 2\lfloor 0.5 c \rfloor$.

A \emph{nested history tree} is a triple $(\T,l,\lambda)$, where $\T$ is a finite, prefix closed subset of finite sequences of natural numbers and a special symbol $\mathfrak s$ (for \emph{stepchild}), $\omega \cup \{\mathfrak s\}$.
We refer to all other children $vc$, $c\in \omega$ of a node $v$ as its \emph{natural children}.
We call $l(v)$ the label of the node $v \in \T$, and $\lambda(v)$ its \emph{level}.

A node $v \neq \varepsilon$ is called a \emph{Rabin root}, iff it ends in $\mathfrak s$.
The root $\varepsilon$ is called a Rabin root iff $c>e$.
A node $v \in \T$ is called a \emph{base node} iff it is not a Rabin root and $\lambda(v) = 2$.
The set of base nodes is denoted $\mathsf{base}(\T)$.
\begin{itemize}
 \item The label $l(v)$ of each node $v \neq \varepsilon$ is a subset of the label of its predecessor:
 
  $l(v)\subseteq l(\pred(v))$ holds for all $\varepsilon \neq v \in \T$.

 \item The intersection of the labels of two siblings is disjoint:
  
  $\forall v,v' {\in} \T.\ v {\neq} v' \wedge \pred(v) {=} \pred(v') \Rightarrow l(v) {\cap} l(v') = \emptyset$.

 \item For all \emph{base nodes}, the union of the labels of all siblings is \emph{strictly} contained in the label of their predecessor:
  
  $\forall v {\in} \mathsf{base}(\T)\ \exists q {\in} l(v)\ \forall v' {\in} \T.\ v{=}\pred(v') \Rightarrow q {\notin} l(v')$.

\item A node $v\in \T$ has a stepchild iff $v$ is neither a base-node, nor a Rabin root.

\item The union of the labels of all siblings of a non-base node \emph{equals} the union of its children's labels:
\hfill  
$\forall v {\in} \T\smallsetminus \mathsf{base}(\T)$,
$l(v) = \{q \in l(v') \mid v' \in \T \mbox{ and } v =\pred(v')\}$ holds.

\item The level of the root is $\lambda(\varepsilon)=e$.

\item The level of a stepchild is 2 smaller than the level of its parent:
\hfill
for all $v\mathfrak{s} \in \T$, $\lambda(v\mathfrak{s}) = \lambda(v)-2$ holds.

\item The level of all other children equals the level of its parent:
% \hfill
for all $i \in \omega$ and $vi \in \T$, $\lambda(vi) = \lambda(v)$ holds.
\end{itemize}

While the definition sounds rather involved, it is (for odd $c$) a nesting of RHTs.
Indeed, for $c = 3$, we simply get the RHTs, and $\lambda$ is the constant function with domain $\{2\}$.
For odd $c >3$, removing all nodes that contain an $\mathfrak{s}$ somewhere in the sequence again resemble RHTs,
while the sub-trees rooted in a node $v\mathfrak{s}$ such that $v$ does not contain a $\mathfrak{s}$ resemble NHTs whose root has level $c-3$.

The transition mechanism from the previous subsection is adjusted accordingly.
For each level $a$ (note that levels are always even), we define three sets of transitions for the parity automaton $\mathcal P$:
the rejecting transitions $R_a = \{t \in T \mid \pri(t)>a$ and $\pri(t)$ is odd$\}$; the accepting transitions $A_a = \{t \in T \mid \pri(t) \geq a$ and $\pri(t)$ is even$\}$, and the (at least) neutral transitions, $N_a = T \smallsetminus R_a$.

\paragraph*{Construction.} Let $\mathcal{P} = \big(P,\Sigma,I,T,\{\pri:P \rightarrow [c]\big)$ be a nondeterministic parity automaton with $|P|=n$ states.

We construct a language equivalent deterministic Rabin automaton $\mathcal{DR}=(D,\Sigma,d_{0},\Delta,\{(A_i,R_i) \mid i \in J\})$ where,
\begin{itemize}
 \item $D$ is the set of NHTs over $P$ (i.e., with $l(\varepsilon) \subseteq P$) whose root has level $e$, where $e=c$ if $c$ is even, and $e=c-1$ if $c$ is odd,
 \item $d_{0}$ is the NHT we obtain by starting with $(\{\varepsilon\},\ l:\varepsilon \mapsto I,\ \lambda: \varepsilon \mapsto e)$, and performing Step 7 from the transition construction until an NHT is produced.
 \item $J$ is the set of nodes $v$ that occur in some NHT of level $e$ over $P$, and
 \item for every tree $d\in D$ and letter $\sigma\in\Sigma$, the transition $d'=\Delta(d,\sigma)$ is the result of the sequence of transformations described below.
\end{itemize}

\paragraph*{\textbf{Transition mechanism for determinising parity automata.}}
\label{TransMechParity}
Note that we do not define the update of $\lambda$, but use $\lambda$.
This can be done because the level of the root always remains $\lambda(\varepsilon)=e$;
the level $\lambda(v)$ of all nodes $v$ is therefore defined by the number of $\mathfrak{s}$ occurring in $v$.
Likewise, the property of $v$ being a base-node or a Rabin root is, for a given $c$, a property of $v$ and independent of the labelling function.

Starting from an NHT $d=(\T,l,\lambda)$, we define the transitions $\Delta: (d,\sigma) \mapsto d'$ as follows:
\begin{enumerate}
\item \emph{Update of node labels (subset constructions):}
For the root, we continue to use $l_1(\varepsilon)=\{q' \in Q \mid \exists q \in l(\varepsilon).\ (q,\sigma,q') \in T\}$.

For other nodes $v \in \T$ that are \emph{no Rabin roots}, we use
$l_1(v)=\{q' \in Q \mid \exists q \in l(v).\ (q,\sigma,q') \in N_{\lambda(v)}\}$.

For the remaining Rabin roots $v\mathfrak{s}\in \T$, we use
$l_1(v\mathfrak{s})=\{q' \in Q \mid \exists q \in l(v\mathfrak{s}).\ (q,\sigma,q') \in N_{\lambda(v)}\}$.
That is, we use the neutral transition of the higher level of the \emph{parent} of the Rabin node.
 
\item \emph{Splitting of run threads / spawning new children.}
In this step, we spawn new children for every node in the NHT.
For nodes $v \in \T$ that are no Rabin roots, we spawn a child labelled with the set of states reachable through accepting transitions.
For a Rabin root $v \in \T$, we spawn a new child labelled like the root.

Thus, for every node $v\in \T$ which is no Rabin root and has $c$ \emph{natural} children, we spawn a new child $vc$ and expand $l_1$ to $vc$ by assigning 
$l_1: vc \mapsto \{q \in Q \mid \exists q' \in l(v).\ (q',\sigma,q) \in A_{\lambda(v)}\}$.
If a Rabin root $v$ has $c$ natural children, we spawn a new child $vc$ of the Rabin root $v$ and expand $l_1$ to $vc$ by assigning  $l_1: vc \mapsto l_1(v)$.
We use $\mathcal \T_n$ to denote the extended tree that includes the new children.

\item \emph{Removing states from labels -- horizontal pruning.}
We obtain a function $l_2$ from $l_1$ by removing, for every node $v$ with label $l(v)=Q'$ and all states $q\in Q'$, $q$ from the labels of all younger siblings of $v$ and all of their descendants.

Stepchildren are always treated as the \emph{youngest} sibling, irrespective of the order of creation.
 
\item \emph{Identifying breakpoints -- vertical pruning.}
We denote with $\T_e\subseteq \T_n$ the set of all nodes $v\neq \varepsilon$ whose label $l_2(v)$ is now equal to the union of the labels of its \emph{natural} children.
We obtain $\T_v$ from $\T_n$ by removing all descendants of nodes in $\T_e$, and restrict the domain of $l_2$ accordingly.

Nodes in $\T_v \cap \T_e$ represent the breakpoints reached during the infinite run $\rho$ and are called \emph{accepting}.
That is, the transition of $\mathcal{DR}$ will be in $A_v$ for exactly the $v \in \T_v \cap \T_e$.
Note that Rabin roots cannot be accepting.

\item \emph{Removing nodes with empty label.} We denote with $\T_r = \{v \in \T_v \mid l_2(v) \neq \emptyset\}$ the subtree of $\T_v$ that consists of the nodes with non-empty label and restrict the domain of $l_2$ accordingly.

\item \emph{Reordering.}
To repair the orderedness, we call $\|v\| = |\os(v) \cap \T_r|$ the number of (still existing) older siblings of $v$, and map
$v= n_1\ldots n_j$ to %\linebreak 
$v' = \|n_1\|\ \|n_1n_2\|\ \|n_1n_2n_3\|\ldots \|v\|$, denoted $\rename(v)$.

For $\T_o=\rename(\T_r)$, we update a pair $(\T_r,l_2)$ from Step 5 to
$d'=\big(\T_o, l'\big)$ with $l': \rename(v) \mapsto l_2(v)$.

We call a node $v\in \T_o\cap \T$ \emph{stable} if $v=\rename(v)$, and we call all nodes in $J$ \emph{rejecting} if they are not stable.
That is, the transition will be in $R_v$ exactly for those $v \in J$, such that $v$ is not a stable node in $\T \cap \T'$.

\item \emph{Repairing nestedness.}
We initialise $\T'$ to $\T_o$ and then add recursively for
\begin{itemize}
 \item Rabin roots $v$ without children a child $v0$ to $\T'$ and expand $l'$ by assigning $l': v0 \mapsto l'(v)$, and for
 \item nodes $v$, which are neither Rabin roots nor base-nodes, without children a child $v\mathfrak{s}$ to $\T'$ and expand $l'$ by assigning $l': v\mathfrak{s} \mapsto l'(v)$
\end{itemize}
until we have constructed an NHT $d'=(\T',l',\lambda')$.
\end{enumerate}

\begin{lemma} $L(\mathcal{P}) \subseteq L(\mathcal{DR})$
\label{deterp1}
\end{lemma}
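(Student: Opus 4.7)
The plan is to generalise the host-node argument of Lemma \ref{lemma:determ1} to the level of the NHT corresponding to the maximum even priority occurring infinitely often along the given accepting run. Let $\rho = q_0 q_1 q_2 \ldots$ be an accepting run of $\mathcal{P}$ on $\alpha$ and let $2k$ be the largest even priority that appears infinitely often in $\overline{\rho}$. By acceptance, every odd priority strictly greater than $2k$ appears only finitely often, so there is an index $i^*$ after which every transition $(q_j,\alpha(j),q_{j+1})$ lies in $N_{2k}$, and infinitely many of them lie in $A_{2k}$.

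Next I would isolate the ``level-$2k$ slice'' of each NHT. Descending from $\varepsilon$ (level $e$) through stepchildren drops the level by $2$ at each step, so every node has a unique ancestor of level $2k$; call it $\mu$. The sub-tree consisting of $\mu$ together with the descendants of $\mu$ that are not separated from it by a further stepchild obeys exactly the structural constraints of an RHT, and Steps 1--7 of the parity transition mechanism, when restricted to that sub-tree, coincide with the one-pair Rabin update of Section \ref{Rabindet} taken with $A = A_{2k}$ and $R = R_{2k}$. A short downwards induction on the even levels $a = e, e-2, \ldots, 2k+2$ then shows that the path from $\varepsilon$ to the level-$2k$ ancestor of $\host(q_j,d_j)$ eventually stabilises on some fixed $\mu\in J$: only finitely many transitions have priority $>a$ for each $a$ in this range, so after some index the activity above level $a$ freezes, which in particular fixes the unique level-$(a-2)$ Rabin root holding $q_j$ in its label.

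Once $\mu$ is fixed, the argument of Lemma \ref{lemma:determ1} applies verbatim inside the level-$2k$ RHT slice rooted at $\mu$. Setting $s = \liminf_j |\host(q_j,d_j)|$, the length-$s$ prefixes of the host nodes form a lexicographically descending sequence in a finite domain and must therefore stabilise on some $\pi\in J$ that extends $\mu$. The assumption that $\pi$ is only finitely often accepting is then refuted by the infinitely many $A_{2k}$-transitions: each such transition after the last accepting occurrence of $\pi$ would push $q_j$ into a strictly younger child of $\pi$, from which monotonicity (guaranteed by $\pi$ being stable and not rejecting) would forbid ever returning, contradicting the definition of $s$. Hence $\pi$ is eventually always stable and always eventually accepting, which places the run of $\mathcal{DR}$ on $\alpha$ in the Rabin language.

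The principal obstacle is the coupling between stabilisations at different levels: a level-$2k$ node could, in principle, be declared non-stable (hence rejecting) because of reorderings triggered by prunings higher up the tree. The downwards induction on the higher levels is exactly what is needed to suppress this; once all higher-level activity has ceased, the level-$2k$ slice evolves in complete isolation as a pure RHT, and Lemma \ref{lemma:determ1}'s argument transfers without further modification.
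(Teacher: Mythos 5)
Your overall strategy (pick the dominating even priority $2k$ of the accepting run, descend to the level-$2k$ part of the NHT, and rerun the RHT argument of Lemma~\ref{lemma:determ1} there) is the right flavour, but the step that carries all the weight is asserted with a justification that is false. You claim that because only finitely many transitions of $\rho$ have priority $>a$ for $a\geq 2k+2$, ``the activity above level $a$ freezes'' and hence the level-$2k$ ancestor $\mu$ of $\host_{2k}(q_j,d_j)$ stabilises. The hypothesis only controls the transitions of the \emph{one} run $\rho$; the deterministic automaton tracks \emph{all} runs on $\alpha$, and other runs may keep firing transitions in $A_{a}$ and $R_{a}$ for $a>2k$ forever. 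So nodes above level $2k$ can keep spawning children, hitting breakpoints (which delete the entire level-$2k$ slice below them and make its Rabin root rejecting), and being renamed, indefinitely. The level-$2k$ ancestor of the host of $q_j$ therefore need not stabilise, and the slice need not ``evolve in complete isolation as a pure RHT''. (Also, minor: a node of level $>2k$ has no level-$2k$ ancestor at all; you mean the level-$2k$ ancestor of the $2k$-host, which by definition has level $2k$.)

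What you are missing is the case in which the witness for acceptance of $\mathcal{DR}$ lives at a level \emph{strictly above} $2k$: if an ancestor $v$ of the hosts with $\lambda(v)>2k$ is eventually stable and infinitely often accepting (because of other runs), the level-$2k$ slice is reset infinitely often and your argument collapses, yet $\mathcal{DR}$ still accepts --- via the index $v$. The paper's proof is organised precisely around this: it takes the longest prefix $v$ of the eventually-common prefix of the $a$-hosts such that no prefix of $v$ is rejecting infinitely often, and proves \emph{that} node is infinitely often accepting, splitting into $\lambda(v)=a$ (your RHT argument) and $\lambda(v)>a$. The second case needs two sub-arguments you do not have: either $q_i$ eventually enters a natural child of $v$ and the RHT induction runs at level $\lambda(v)$, or it never does, in which case $q_i$ persists in the stepchild $v\mathfrak{s}$ and one must show that this Rabin root is not rejecting infinitely often (because its parent is henceforth neither accepting nor unstable), contradicting maximality of $v$. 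Without this case analysis the proof has a genuine gap.
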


\paragraph*{Notation.}
For a state $q$ of $\mathcal{P}$, an NHT $d=(\T,l,\lambda)$ and an even number $a \leq e$, we call a node $v'$ the \emph{$a$ host node of $q$}, denoted $\host_a(q,d)$, if $q \in l(v')$, but not in $l(v'c)$ for any natural child $v'c$ of $v'$, and $\lambda(v')=a$.

Let $\rho = q_0,q_1, q_2 \ldots$ be an accepting run of $\mathcal{P}$ with even $a = \liminf_{i \rightarrow \infty}\pri\big(q_i,\alpha(i),q_{i+1}\big)$ on an $\omega$-word $\alpha$, let $d_0d_1d_2\ldots$ be the run of $\mathcal{DR}$ on $\alpha$, and let $v_i=\host_a(q_i,d_i)$ for all $i\in \omega$.

\begin{proofidea}The core idea of the proof is again that the state of each accepting run is eventually `trapped' in a maximal initial sequence $v$ of $a$-hosts, with the additional constraint that neither $v$ nor any of its ancestors are infinitely often rejecting, and the transitions of the run of $\mathcal P$ are henceforth in $N_a$.

We show by contradiction that $v$ is accepting infinitely often.
For $\lambda(v)=a$, the proof is essentially the same as for one-Rabin determinisation.
For $\lambda(v)>a$, the proof is altered by a case distinction, where one case assumes that, for some index $i >0$ such that, for all $j \geq i$, $v$ is a prefix of all $v_j$, $(q_{j-1},\alpha(j-1),q_j)\in N_a$, and $(d_{j-1},\alpha(j-1),d_j)\notin R_v \cup A_v$, $q_i$ is in the label of a natural child $vc$ of $v$.
This provides the induction basis -- in the one-pair Rabin case, the basis is provided through the accepting transition of the one-pair Rabin automaton, and we have no corresponding transition with even priority $\geq \lambda(v)$ -- by definition.
If no such $i$ exists, we choose an $i$ that satisfies the above requirements except that $q_i$ is in the label of a natural child $vc$ of $v$.
We can then infer that the label of $v\mathfrak{s}$ also henceforth contains $q_i$.
As a Rabin root whose parent is not accepting or rejecting, $v\mathfrak{s}$ is not rejecting either.
\end{proofidea}

\begin{proof} We fix an accepting run $\rho=q_{0}q_{1} \ldots$ of $\mathcal{P}$ on an input word $\alpha$, and use $a = \liminf_{i\rightarrow \infty}\big((q_i,\alpha(i),q_{i+1})\big)$ to refer to the dominating even priority of its transitions $\overline{\rho}$.
We also let $\rho_{\mathcal DR}=d_{0}d_{1}\ldots$ be the run of $\mathcal DR$ on $\alpha$.
We then define the related sequences of host nodes $\vartheta=v_{0}v_{1}v_{2}\ldots=\host_a(q_0,d_0)\host_a(q_1,d_1)\host_a(q_2,d_2)\ldots$.

Note that Rabin roots cannot be the $a$ host node of any state, as it is always labelled by the union of the labels of its children, and its children have the same level as the Rabin root itself.

Let
\begin{itemize}
\item $v'$ be the longest sequence, which is the initial sequence of almost all $v_i$, and
\item $v$ the longest initial sequence of $v'$, such that, for no initial sequence $v''$ of $v$ (including $v$ itself), infinitely many transitions $(d_i,\alpha(i),d_{i+1})$ are in $R_a$. 
\end{itemize}

We first observe that such a node $v$ exists: as $q_i \in l_i(\varepsilon)$ for $d_i=(\T_i,l_i,\lambda_i)$ for all $i \in \omega$, $\varepsilon$ satisfies all requirements except for maximality, such that a maximal element $v$ exists.
We now distinguish two cases.

`$a = \lambda(v)$':
The first case is that the level of the node $v$ equals the dominating priority of $\overline{\rho}$.
For this case, we can argue as in the one-Rabin pair case:
if the transition is infinitely often in the set $A_v$ of $\mathcal{DR}$, then $\rho_{\mathcal{DR}}$ is accepting.
Otherwise we choose a point $i \in \omega$ with the following properties:
\begin{itemize}
 \item for all $j \geq i$, $(q_j,\alpha(j),q_{j+1}) \in N_a$,
 \item for all $j \geq i$ and all initial sequences $w$ of $v$, $(d_j,\alpha(j),d_{j+1}) \notin R_w$,
 \item for all $j \geq i$, $(d_j,\alpha(j),d_{j+1}) \notin A_v$, and
 \item $\pri(q_i,\alpha(i),q_{i+1})=a$.
\end{itemize}

We can now build a simple inductive argument with the following ingredients.
\begin{description}
 \item[Induction basis:] $ $ \newline
 There is a $k \in \omega$ such that $q_{i+1} \in l_{i+1}(vk)$.
 
 The induction basis holds as the transition $(q_i,\alpha(i),q_{i+1})$ is in $A_a$ and the node $v$ is stable and non-accepting in $(d_i,\alpha(i),d_{i+1})$.
 
\item[Induction step:]  $ $ \newline
if, for some $k \in \omega$ and $j > i$, $q_{j} \in l_{j}(vk)$, then
\begin{itemize}
 \item there is a $k'\leq k$ such that $q_{j+1} \in l_{j+1}(vk')$, and
 \item if $k=k'$ then $(d_j,\alpha(j),d_{j+1}) \notin R_{vk}$.
\end{itemize}

To see this, $q_{j+1}$ is added to the `$l_1(vk)$' from Step 1 of the transition mechanism of the transition $(d_j,\alpha(j),d_{j+1})$.
As $v$ is stable but not accepting, the two only reason for $q_{j+1} \notin l_{j+1}(vk)$ are that
\begin{itemize}
 \item there is, for some $k'' < k$, a $q \in l_j(vk'')$ with and $(q,\alpha(j),q_{j+1}) \in N_{\lambda(v)}$ (note that $\lambda(v) = \lambda(vk) = \lambda(vk'') = a$), or
 \item for some $k'' < k$, the node $vk''$ is removed in Step 5 of the transition mechanism of the transition $(d_j,\alpha(j),d_{j+1})$.
\end{itemize}
In both cases (and their combination), we have $k' < k$.
If neither is the case, then $(d_j,\alpha(j),d_{j+1}) \notin R_{vk}$ (as $\rename(vk)=vk$ holds in the transition mechanism). 
\end{description}

The position $k \in \omega$ of the child $vk$ with $q_j \in l(vk)$ can thus only be decreased finitely many times (and $\lambda(vk)=a$ for all $k \in \omega$).
For some $k\in \omega$, $vk$ is therefore a prefix of almost all $v_i$ of $\vartheta$.
Once stable, it is henceforth no more rejecting.
This contradicts the assumption that $v$ is the longest such sequence.

`$a > \lambda(v)$':
The second case is that the level of $v$ is strictly greater than the dominating priority of $\overline{\rho}$.
We argue along similar lines.
If the transition is infinitely often in the set $A_v$ of $\mathcal{DR}$, then $\rho_{\mathcal{DR}}$ is accepting.
Otherwise we choose a point $i \in \omega$ with the following properties:
\begin{itemize}
 \item for all $j \geq i$, $(q_j,\alpha(j),q_{j+1}) \in N_a$,,
 \item for all $j \geq i$ and all initial sequences $w$ of $v$, $(d_j,\alpha(j),d_{j+1}) \notin R_w$, and
 \item for all $j \geq i$, $(d_j,\alpha(j),d_{j+1}) \notin A_v$.
\end{itemize}

The difference to the previous argument is that the third prerequisite, `$\pri(q_i,\alpha(i),q_{i+1})=a$', holds no longer.
This was used for the induction basis.
We replace this by a distinction of two sub-cases.

The first one is, that we do have an induction basis: we can choose the $i$ such that there is a $k \in \omega$ such that $q_{i+1} \in l_{i+1}(vk)$.
The rest of the argument can be copied for this case:
\begin{description}
\item[Induction step:]  $ $ \newline
if, for some $k \in \omega$ and $j > i$, $q_{j} \in l_{j}(vk)$, then
\begin{itemize}
 \item there is a $k'\leq k$ such that $q_{j+1} \in l_{j+1}(vk')$, and
 \item if $k=k'$ then $(d_j,\alpha(j),d_{j+1}) \notin R_{vk}$.
\end{itemize}

To see this, $q_{j+1}$ is added to the `$l_1(vk)$' from Step 1 of the transition mechanism of the transition $(d_j,\alpha(j),d_{j+1})$.
As $v$ is stable but not accepting, the two only reason for $q_{j+1} \notin l_{j+1}(vk)$ are that
\begin{itemize}
 \item there is, for some $k'' < k$, a $q \in l_j(vk'')$ with and $(q,\alpha(j),q_{j+1}) \in N_{\lambda(v)}$ (note that $\lambda(v) = \lambda(vk) = \lambda(vk'') = a$), or
 \item for some $k'' < k$, the node $vk''$ is removed in Step 5 of the transition mechanism of the transition $(d_j,\alpha(j),d_{j+1})$.
\end{itemize}
In both cases (and their combination), we have $k' < k$.
If neither is the case, then $(d_j,\alpha(j),d_{j+1}) \notin R_{vk}$ (as $\rename(vk)=vk$ holds in the transition mechanism). 
\end{description}

The position $k \in \omega$ of the child $vk$ with $q_j \in l(vk)$ can thus only be decreased finitely many times (and $\lambda(vk)=a$ for all $k \in \omega$).
For some $k\in \omega$, $vk$ is therefore a prefix of almost all $v_i$ of $\vartheta$.
Once stable, it is henceforth no more rejecting.
This contradicts the assumption that $v$ is the longest such sequence.

The other sub-case is that no such $i$ exists.
We then choose $i$ such that the two remaining conditions are met. 
As $\lambda(v)>a \geq 2$ holds, the union of the labels of the children of $v$ must be the same as the label of $v$.
Consequently, we have $q_j \in l(v\mathfrak{s})$ for all $j > i$.
It remains to show that $v\mathfrak{s}$ is not rejecting infinitely many times.
But the only ways a Rabin root can be rejecting is that its parent node is accepting (the breakpoint of Step 4 from the transition mechanism) or not stable (Step 5 with Step 3, removing states from the label that occur in younger siblings) in a transition.
But both are excluded in the definition of $i$.

Finally, we note that, for all $v_i$ in $\vartheta$, $\lambda(v_i) = a$ holds by construction.
Consequently, $\lambda(v_i') \geq a$ holds for all initial sequences $v_i'$ of $v_i$.
In particular, we have $\lambda(v) \geq a$, such that the above case distinction is complete.
\end{proof} 

\begin{lemma} $L(\mathcal{DR}) \subseteq L(\mathcal{P})$
\label{deterp2}
\end{lemma}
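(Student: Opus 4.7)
I would follow the structure of the proof of Lemma~\ref{lemma:determ2} and lift the one-pair Rabin argument to the NHT setting. Let $\alpha \in L(\mathcal{DR})$, pick $v \in J$ that is eventually always stable and always eventually accepting on $\alpha$, and set $a = \lambda(v)$; since Rabin roots are never accepting, $a$ is even. Fix an ascending chain $1 < i_0 < i_1 < i_2 < \ldots$ such that $v$ is stable on every transition $(d_{j-1},\alpha(j-1),d_j)$ with $j \geq i_0$, and such that this chain enumerates exactly those $j \geq i_0$ with $(d_{j-1},\alpha(j-1),d_j) \in A_v$.

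The first major step is a \emph{tracing claim}: for every $j \geq 0$ and every $q \in l_{i_{j+1}}(v)$, there exist $q' \in l_{i_j}(v)$ and a sequence $q' = p_{i_j}, p_{i_j+1}, \ldots, p_{i_{j+1}} = q$ along $\alpha[i_j, i_{j+1}[$ whose transitions all lie in $N_a$ with at least one in $A_a$. Because $v$ is a breakpoint at $i_{j+1}$, $q$ lies in some natural child $w$ of $v$; because Step~4 removes all natural descendants of $v$ at the previous breakpoint $i_j$, the child $w$ was freshly spawned at some time $t \in (i_j, i_{j+1}]$. At spawning, Step~2 places $p_t \in l_t(w)$ via an $A_a$-transition from some $p_{t-1} \in l_{t-1}(v)$, while Step~1 propagates the labels of $v$ and $w$ through $N_a$-successors on either side. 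An analogous argument shows that every $q \in l_{i_0}(v)$ is reachable from $I$ via a run whose final transition lies in $N_a$, since $l_{i_0}(v)$ is built by Step~1 as an $N_a$-successor image of $l_{i_0-1}(v)$.

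With the tracing claim in hand, I would build the tree $T_{\mathrm{run}}$ of finite initial sequences $q_0 \ldots q_m$ of $\mathcal{P}$-states satisfying $q_0 \in I$, all transitions in $T$, every transition at time $t \geq i_0 - 1$ in $N_a$, and at least one $A_a$-transition per completed interval $[i_k, i_{k+1})$. An induction on $j$ using the tracing claim shows that every $q \in l_{i_j}(v)$ underlies some tree vertex of length $i_j + 1$, so $T_{\mathrm{run}}$ is infinite; since it is finitely branching (bounded by $|Q|$), K\"onig's lemma yields an infinite branch $q_0 q_1 q_2 \ldots$. On this branch no odd priority above $a$ appears infinitely often (by the $N_a$ condition) and infinitely many transitions carry an even priority at least $a$ (by the $A_a$ condition), so $\limsup_{n \to \infty} \pri(q_n, \alpha(n), q_{n+1})$ is even and the branch is an accepting run of $\mathcal{P}$ on $\alpha$.

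The main obstacle will be the tracing claim, and in particular verifying that the backward trace always picks up an $A_a$-transition; this relies crucially on Step~4 wiping out the natural descendants of $v$ at every breakpoint, which forces every surviving natural child of $v$ at $i_{j+1}$ to have been born within $(i_j, i_{j+1}]$ and so to supply the required $A_a$-transition at birth. Some additional care is needed to check that horizontal pruning, empty-label removal, and reordering (Steps~3, 5, 6) merely discard or rename nodes and thereby preserve the backward label-membership history of any state that remains in $l(v)$, so that the tracing argument is not disrupted by bookkeeping.
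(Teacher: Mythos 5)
Your proposal is correct and follows essentially the same route as the paper's proof: fix the chain of accepting indices for the stable node $v$ with $a=\lambda(v)$, establish the reachability relations $I \rightarrow^{\alpha[0,i_0[} l_{i_0}(v)$ and $l_{i_j}(v) \Rightarrow^{\alpha[i_j,i_{j+1}[}_a l_{i_{j+1}}(v)$, build the tree of run prefixes, and apply K\"onig's lemma. Your ``tracing claim'' is exactly the step the paper dispatches with ``by construction,'' and your justification of it (freshly spawned natural children at each breakpoint supplying the $A_a$-transition, with $N_a$-propagation elsewhere) is sound.
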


The proof of this lemma is essentially the proof of Lemma \ref{lemma:determ2} where, for the priority $a=\lambda(v)$ chosen to be the level of the accepting index $v$, $A_a$ takes the role of the accepting set $A$ from the one-pair Rabin automaton.
% A full proof is given in the appendix.

\paragraph*{Notation.}
We denote with $Q_1 \Rightarrow^\alpha_a Q_2$ for a finite word $\alpha=\alpha_1\ldots\alpha_{j-1}$ that there is, for all $q_j\in Q_2$, a sequence
$q_1\ldots q_j$ with
\begin{itemize}
 \item $q_1 \in Q_1$,
 \item $(q_i,\alpha_i,q_{i+1})\in N_a$ for all $1\leq i <j$, and
 \item $(q_i,\alpha_i,q_{i+1})\in A_a$ for some $1\leq i <j$.
\end{itemize}

\begin{proof}
Let $\alpha \in L(\mathcal{DR})$.
Then there is a $v$ that is eventually always stable and always eventually accepting in the run $\rho_{\mathcal DR}$ of $\mathcal DR$ on $\alpha$.
We pick such a $v$.

Let $1 < i_{0}<i_{1}<i_{2}<\ldots$ be an infinite ascending chain of indices such that
\begin{itemize}
 \item $v$ is stable for all transitions $(d_{j-1},\alpha(j-1),d_{j})$ with $j\geq i_0$, and
 \item the chain $ i_{0}<i_{1}<i_{2}<\ldots$ contains exactly those indices $i\geq i_0$ such that $(d_{i-1},\alpha(i-1),d_{i})$ is accepting.
\end{itemize}

Let $d_i=(\T_i,l_i,\lambda_i)$ for all $i \in \omega$.
By construction, we have
\begin{itemize}
 \item $I \rightarrow^{\alpha[0,i_0[} l_{i_0}(v)$, and
 \item $l_{i_j}(v) \Rightarrow^{\alpha[i_j,i_{j+1}[}_a l_{i_{j+1}}(v)$.
\end{itemize}

Using this observation, we can build a tree of initial sequences of runs as follows:
we build a tree of initial sequences of runs of $\mathcal P$ that contains a sequence
$q_0q_1q_2\ldots q_{i_j}$ for any $j\in \omega$ iff
\begin{itemize}
 \item $(q_i,\alpha(i),q_{i+1}) \in T$ is a transition of $\mathcal P$ for all $i < i_j$,
 \item $(q_i,\alpha(i),q_{i+1}) \in N_a$ is not rejecting for all $i \geq i_0 - 1$, and
 \item for all $k<j$ there is an $i\in [i_k,i_{k+1}[$ such that $(q_i,\alpha(i),q_{i+1}) \in A_a$ is an accepting transition.
\end{itemize}

By construction, this tree has the following properties:
\begin{itemize}
 \item it is infinite,
 \item it is finitely branching,
 \item no branch contains more than $i_0$ transitions with odd priority $>a$, and,
 \item for all $j \in \omega$, a branch of length $>i_j$ contains at least $j$ transitions with even priority $\geq a$.
\end{itemize}

Exploiting K\"onig's lemma, the first two properties provide us with an infinite path, which is a run of $\mathcal P$ on $\alpha$.
The last two properties then imply that this run is accepting. 
$\alpha$ is therefore in the language of $\mathcal P$.
\end{proof}

\begin{corollary}
\label{parityeq}
$L(\mathcal{P}) = L(\mathcal{DR})$.
\end{corollary}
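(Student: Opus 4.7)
The plan is essentially a one-line assembly: combine Lemmas~\ref{deterp1} and \ref{deterp2} to get both inclusions $L(\mathcal{P}) \subseteq L(\mathcal{DR})$ and $L(\mathcal{DR}) \subseteq L(\mathcal{P})$, which together give the claimed equality. There is no genuine content here beyond the two lemmas, so the corollary should be stated with a one-sentence proof.

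Concretely, I would write: ``By Lemma~\ref{deterp1}, every $\omega$-word accepted by $\mathcal{P}$ is accepted by $\mathcal{DR}$, and by Lemma~\ref{deterp2}, every $\omega$-word accepted by $\mathcal{DR}$ is accepted by $\mathcal{P}$; hence $L(\mathcal{P}) = L(\mathcal{DR})$.'' No new definitions, case analyses, or inductive arguments are required at this point, since the nontrivial content (the trapping argument for host nodes in accepting runs, and the K\"onig's Lemma construction for recovering a run of $\mathcal{P}$ from an eventually-stable, always-eventually-accepting index $v$) has already been carried out in the two lemmas.

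There is no real obstacle to overcome. The only thing worth double-checking is that the two lemmas are stated for the \emph{same} deterministic automaton $\mathcal{DR}$ (built from the same parity automaton $\mathcal{P}$ via the transition mechanism in Section~\ref{paritydet}), which they are. Thus the corollary is an immediate bookkeeping step, and the ``proof'' is literally a citation of the two preceding lemmas.
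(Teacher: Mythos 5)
Your proposal is correct and matches the paper exactly: the corollary is stated without proof precisely because it is the immediate conjunction of Lemma~\ref{deterp1} and Lemma~\ref{deterp2}, both of which concern the same automaton $\mathcal{DR}$ built from $\mathcal{P}$. Nothing further is needed.
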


\subsection{Determinising to a deterministic parity automata $\mathcal D$}
\label{subs:det2parity}
Deterministic parity automata seem to be a nice target when determinising parity or one-pair Rabin automata given that algorithms that solve parity games (e.g, for acceptance games of alternating and emptiness games of nondeterministic parity tree automata) have a lower complexity when compared to solving Rabin games. For \buchi\ and Streett automata, determinisation to parity automata was first shown by Piterman in \cite{Piterman/07/Parity}. For applications that involve co-determinisation, the parity condition also avoids the intermediate Streett condition.

Safra's determinisation construction (and younger variants) intuitively enforces a parity-like order on the nodes of history trees. By storing the order in which nodes are introduced during the construction, we can capture the Index Appearance Records construction that is traditionally used to convert Rabin or Streett automata to parity automata. To achieve this, we augment the states of the deterministic automaton (RHTs or NHTs) with a \emph{later introduction record} (LIR), an abstraction of the order in which the non-Rabin nodes of the ordered trees are introduced. (As Rabin roots are but redundant information, they are omitted in this representation.)

For an ordered tree $\mathcal{T}$ with $m$ nodes that are no Rabin roots, an LIR is a sequence $v_{1}, v_{2},\dots v_{m}$ that contains the nodes  of $\mathcal{T}$ that are no Rabin roots nodes, such that, each node appears after its ancestors and older siblings. 
For convenience in the lower bound proof, we represent a node $v \in \T$ of an NHT $d=(\T,l,\lambda)$ in the LIR by a triple $(S_\p,c_\p,P_\p)$ where $S_v = l(v)$, is the label of $v$, $c_v = \lambda(v)$ the level of $v$, and $P_v = \{q \in Q \mid v=\host_{c_v}(q,d)\}$ is the set of states $c_v$ hosted by $v$.
The $v$ can be reconstructed by the order and level. 
We call the possible sequences of these triples \emph{LIR-NHTs}.
Obviously, each LIR-NHT defines an NHT, but not the other way round.

A finite sequence $(S_1,c_1,P_1) (S_2,c_2,P_2)(S_3,c_3,P_3) \ldots \linebreak[2] (S_k,c_k,P_k) $ of triples is a LIR-NHT if it satisfies the following requirements for all $i \in [k]$.
\begin{enumerate}
 \item $P_i \subseteq S_i$,
 \item $\{P_i\} \cup \{S_j \mid j{>}i,\  c_i {=} c_j,$ and $S_j {\cap} S_i {\neq} \emptyset\}$ partitions~$S_i$.
 \item $\{S_j \mid j>i,\  c_i = c_j+2,$ and $S_j \cap P_i \neq \emptyset\}$ partition $P_i$.
 \item If the highest priority of $\mathcal P$ is even, then $c_i = e$ implies $S_i \subseteq S_1$. (In this case, the lowest level construction is B\"uchi and the first triple always refers to the root.)
 \item For $c_i < e$, there is a $j<i$ with $S_i \subseteq P_j$.
\end{enumerate}

To define the transitions of $\mathcal D$, we can work in two steps.
First, we identify, for each position $i$ of a state $N = (S_1,c_1,P_1) (S_2,c_2,P_2)(S_3,c_3,P_3) \ldots$ of $\mathcal D$, the node $v_i$ of the NHT $d=(\T,l,\lambda)$ for the same input letter.
We then perform the transition $\big(d,\sigma,(\T',l',\lambda')\big)$ on this Rabin automaton.
We are then first interested in the set of non-rejecting nodes from this transition and their indices.
These indices are moved to the left, otherwise maintaining their order.
All remaining vertices of $\T'$ are added at the right, maintaining orderedness.

The priority of the transition is determined by the smallest position $i$ in the sequence, where the related node in the underlying tree is accepting or rejecting.
It is therefore more convenient to use a min-parity condition, where the parity of $\liminf_{n \rightarrow \infty} \pri(\overline{\rho})$ determines acceptance of a run $\rho$.
As this means smaller numbers have higher priority, $\pri$ is representing the opposite of a priority function, and we refer to the priority as the \emph{co-priority} for clear distinction.

If the smallest node is rejecting, the transition has co-priority $2i-1$, if it is accepting (and not rejecting), then the transition has co-priority $2i$, and if no such node exists, then the transition has co-priority $ne+1$.
% (It is simple to translate this back to max-parity.)

\begin{lemma}
\label{parityexp}
Given a nondeterministic parity automaton $\mathcal{P}$ with $|P| = n$ states and maximal priority $c$, we can construct a language equivalent deterministic parity automaton $\mathcal D$ with $n e +1$ priorities for $e=2\lfloor 0.5c \rfloor$, whose states are the LIR-NHTs described above.
\end{lemma}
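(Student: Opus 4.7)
The plan is to obtain $\mathcal D$ from the deterministic Rabin automaton $\mathcal{DR}$ of Section~\ref{paritydet} by augmenting each state with a later introduction record, and to reduce the correctness of the min-parity acceptance of $\mathcal D$ to the Rabin acceptance of $\mathcal{DR}$, which is already established in Corollary~\ref{parityeq}. Each LIR-NHT $(S_1,c_1,P_1)\ldots(S_k,c_k,P_k)$ projects canonically to an underlying NHT (by forgetting the order and reintroducing the Rabin roots required by Step~7 of the parity construction), and the $\mathcal D$-transition on a letter $\sigma$ is obtained by executing the $\mathcal{DR}$-transition on this projection and then updating the order as described. Showing that this is well-defined, deterministic, and that LIR-NHTs are closed under the transition is largely routine, but does require checking that the five conditions in the definition of LIR-NHTs are preserved by the update rule.

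For the priority count, I would argue that for each of the $e/2$ even levels $a \in \{2,4,\ldots,e\}$, the non-Rabin-root nodes of level $a$ lying under any fixed Rabin root form an RHT-like configuration, so that the total number of non-Rabin-root nodes in any reachable NHT is bounded by $ne/2$. Consequently the LIR has length at most $ne/2$, and the set of co-priorities $\{2i-1, 2i \mid 1 \leq i \leq ne/2\} \cup \{ne+1\}$ has exactly $ne+1$ elements.

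The heart of the argument is the equivalence between the Rabin condition on the projected run and the min-parity condition on the LIR. The crucial structural property of the update rule is that non-rejecting indices preserve their relative order while rejecting and freshly introduced indices are appended on the right, so that the position of a node can only decrease, and decreases only when some node strictly to its left is rejected. If $\mathcal{DR}$ accepts $\alpha$, Lemma~\ref{deterp1} provides a node $v$ that is eventually always stable and always eventually accepting; then every initial segment of $v$ is eventually non-rejecting, so the LIR position of $v$ stabilises at some index $i$; from that point on no transition carries a co-priority below $2i-1$, and since $v$ is accepting infinitely often, co-priority $2i$ is attained infinitely often, giving an even $\liminf$. Conversely, if $\liminf$ of co-priorities is $2i$, then eventually the first $i-1$ positions are never rejecting nor accepting and hence remain stable, while the node at position $i$ is eventually never rejecting and infinitely often accepting; this is exactly the hypothesis of Lemma~\ref{deterp2}, which yields an accepting run of $\mathcal P$ on $\alpha$. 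The main obstacle I anticipate is the bookkeeping needed to show that a stable underlying Rabin node really does induce a stable LIR position, and dually, that stabilisation of an LIR position truly implies stability of the underlying node in $\mathcal{DR}$; once this correspondence is made precise, both directions become straightforward appeals to Lemmas~\ref{deterp1} and~\ref{deterp2}.
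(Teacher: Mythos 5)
Your proposal is correct and follows essentially the same route as the paper: augment the states of $\mathcal{DR}$ with a later introduction record, observe that LIR positions are non-increasing once a node is stable and decrease only when a node to the left is rejected, and translate "eventually stable and infinitely often accepting index" back and forth into "even $\liminf$ of co-priorities", appealing to Corollary~\ref{parityeq} (i.e., Lemmas~\ref{deterp1} and~\ref{deterp2}) for the Rabin automaton's correctness. The only cosmetic difference is that you additionally spell out the $ne/2$ bound on the LIR length to justify the priority count, which the paper leaves implicit.
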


\begin{proof}
We use our determinisation technique from \textbf{Section~\ref{paritydet}} to construct a deterministic parity automaton, whose states consist of the LIR-NHTs, i.e., the NHTs augmented with the Later Introduction Records, with the parity index on the transitions from the states of the automata.

First, we observe that $\mathcal P$ is language equivalent to the deterministic Rabin automaton $\mathcal{DR}$ from the construction of Section \ref{paritydet} by Corollary \ref{parityeq}.

Let $\alpha$ be a word in the language $L(\mathcal{DR})$ of the automaton $\mathcal{DR}$. 
By definition of acceptance, we have an index $v$ such that the node $v$ is a node, which is eventually always stable and always eventually accepting in the transitions of the run of $\mathcal{DR}$ on $\alpha$.
Note that $v$ cannot be a Rabin root, as Rabin roots cannot be accepting.

Once stable, the position of this node in the LIR is non-increasing, and it decreases exactly when a node at a smaller position is deleted. This can obviously happen only finitely many times, and the position will thus eventually stabilise at some position $p$.
Moreover, all positions $\leq p$ will then be henceforth stable.

Then, by our construction, it is easy to see that henceforth no transition can have a co-priority $<2p$.
At the same time, for each following transition where $v$ is accepting in the deterministic Rabin automaton, the respective transition of the run of $\mathcal P$ has a priority $\leq 2p$. (At some node that is represented in a position $\leq 2p$, an accepting or rejecting event happens.) These two observations provide, together with the fact that these priorities $\leq 2p$ must occur infinitely many times by the deterministic Rabin automaton being accepting, that the dominating priority of the run is an even priority $\leq 2p$.

In the other direction, let $2i$ be the dominant priority for a run of our DPA $\mathcal D$ on a word $\alpha$. This leads to a scenario where all positions $\leq i$ eventually maintain their positions in the LIR.
The respective nodes they represent remain stable, but not accepting, from then on in the transitions of the run of $\mathcal{DR}$ on $\alpha$.

Observe that all older siblings (and ancestors, except for the omitted Rabin root) of a node $v$ of an NHT are represented on a smaller position than $v$.
The node corresponding to the position $i$ is always eventually accepting in the transitions of $\mathcal{DR}$ on $\alpha$, such that $\alpha$ is accepted by $\mathcal{DR}$.
\end{proof}

\begin{lemma}
\label{parityref}
The DPA resulting from determinising a one-pair Rabin automaton $\mathcal{R}_1$ has $O(n!^2)$ states, and $O\big(n!(n-1!)\big)$ if $\mathcal{R}_1$ is B\"uchi.
\end{lemma}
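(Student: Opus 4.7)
The plan is to count the states of $\mathcal D$, which by construction are the LIR-NHTs over the $n$-state one-pair Rabin automaton. Because $c \leq 3$ in this setting, the non-Rabin-root portion of each NHT degenerates to essentially a single-level structure: a history tree (B\"uchi case) or a one-level RHT-style structure (general one-pair Rabin case). The LIR is then just a linear order on these nodes compatible with the ancestor and older-sibling relations.

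First, I would argue that any reachable LIR-NHT has at most $n$ non-Rabin-root nodes. This follows directly from the strict-containment property of base nodes: each such node owns at least one state missing from every child's label, and these owned sets are pairwise disjoint across distinct nodes, so their total cardinality is at most $n$.

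Second, I would factor the description of a LIR-NHT into two parts and bound each by $n!$ up to constants. Part (i) is the LIR permutation, which orders at most $n$ nodes and so contributes at most $n!$ choices. Part (ii) is the residual structural data: the parent function on the LIR together with the hosting map $Q \to [k] \cup \{\bot\}$ recording, for each state, its host node or that it lies outside the tree. Given the LIR and this data, every $S_i$ and $P_i$ is determined by the nested-label conditions from the definition of LIR-NHTs. The key observation is that this residual data can be repackaged as a second permutation-style object on $[n]$, yielding the target $O(n!)$ bound rather than a loose $n^n$ count. For the B\"uchi refinement, the paper already notes that when $R=\emptyset$ the label of the root's sole child $0$ equals the label of the root, so one coordinate of the structure is redundant; this trims the second factor from $n!$ to $(n-1)!$, giving $O(n!(n-1)!)$.

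The main obstacle is the sharp accounting in part (ii): one needs to show that the parent function together with the hosting map can be encoded by a single permutation on $Q$, not merely by an arbitrary function from $Q$ to nodes. This is the same combinatorial trick that underpins the $O(n!^2)$ bound for the classical Index Appearance Records construction applied to Safra-style trees; adapting it here amounts to tracking, for each state, the step at which it first becomes hosted at its eventual position in the LIR, and using the disjointness of the $P_i$ together with the acyclicity of the parent function to recover a bijective-style description. Once this tight encoding is in place, multiplying the two factorial bounds yields the claimed $O(n!^2)$ and $O(n!(n-1)!)$ estimates.
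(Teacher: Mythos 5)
Your overall shape is right---bound the tree-plus-LIR structure by roughly $n!$ and the labelling by roughly $n!$---but the step you yourself flag as ``the main obstacle'' is exactly the content of the lemma, and you do not overcome it. Two concrete problems. First, the decomposition is muddled: a ``parent function on the LIR'' (each position pointing to an earlier position) already determines both the tree shape and the introduction order, so it is unclear what your separate ``LIR permutation'' factor is counting; the paper's two factors are instead (a) the parent-pointer sequence, of which there are $(m-1)!$ for $m$ nodes, and (b) the hosting map, which by the strict-containment condition is a surjection from $Q$ onto the set of hosting nodes. Second, and more seriously, your key claim that the hosting map ``can be encoded by a single permutation on $Q$'' is false for trees with fewer than the maximal number of nodes: the number of surjections from $[n]$ onto $[n-1]$ is $\binom{n}{2}(n-1)!=\frac{n-1}{2}\,n!$, which already exceeds $n!$, and summing surjections over all target sizes gives the Fubini numbers, which grow like $n!\cdot(1/\ln 2)^{n}$, not $O(n!)$. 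So no permutation-style encoding of the hosting data alone can exist.

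What actually saves the bound---and what the paper does---is to bound the \emph{product} $t(n,m)$ of the two factors for each tree size $m$ and show it decays geometrically as $m$ shrinks: passing from $m$ to $m-1$ nodes multiplies the number of surjections by at most $\frac{m-1}{2}$ (each surjection onto $[m-1]$ arises from at least two surjections onto $[m]$ by merging) while dividing the number of parent-pointer sequences by $m-1$, so $t(n,m-1)\le\frac{1}{2}t(n,m)$ and the total is at most twice the maximal term $n!\cdot n!$ (respectively $(n-1)!\cdot n!$ in the B\"uchi case, where a history tree has at most $n$ nodes and the maximal hosting map is a bijection). Your proposal gestures at ``the same combinatorial trick'' from Index Appearance Records but never exhibits the encoding or the decay argument, so as written the $O(n!)$ bound on your part (ii) is unsupported and the proof is incomplete.
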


\begin{proof}
Let $|Q| = n$ be the number of states of our nondeterministic one-pair Rabin automaton. We explicitly represent (for the sake of evaluating the state-space) the tree structure of an RHT/LIR pair with $m$ nodes by a sequence of $m-1$ integers $i_1,i_2 \dots i_m$ such that $i_j$ points to the position $<j$ of the parent of the node $v_j$ in the LIR $v_{1}, v_{2},\dots v_{m}$. There are $(m-1)!$ such sequences. There is an obvious bijection between this representation of an LIR and its original definition. Thus, for an RHT/LIR pair with $n+1$ nodes, we can have upto $n!$ such RHT/LIR pairs just by virtue of the order of introduction of the nodes.

To more accurately evaluate the number of states, we have to consider the way RHTs are labelled. The root is always labelled with the complete set of reachable states.

We first consider the case where the root is labelled with all $|n|$ states of the nondeterministic one-pair Rabin automaton $\mathcal{R}_1$. Let $t(n,m)$ denote the number of trees and later introduction record pairs for history trees with $m$ nodes and $n=|Q|$ states in the label of the root. First, $t(n,n+1) = (n!\cdot n!)$ holds : For such a tree, there can be upto $n!$ onto functions that resemble the labelling of states of the deterministic automaton and $n!$ RHTs augmented with LIRs. For every $m\leq (n+1)$, a coarse estimation%
\footnote{If we connect functions by letting a function $g$ from $Q$ onto $\{1,\ldots, m-1\}$ be the successor of a function $f$ from $Q$ onto $\{1,\ldots, m\}$ if there is an index $i \in \{1,\ldots, m-1\}$ such that $g(q)=i$ if $f(q)= m$ and $g(q)=f(q)$ otherwise, then the functions onto $m$ have $(m-1)$ successors, while every function onto $m-1$ has at least two predecessors.
Hence, the number of labelling functions grows at most by a factor of $\frac{m-1}{2}$, while the number of ordered tree / LIR pairs is reduced by a factor of $m-1$.}
provides $t(n,m-1)\leq \frac{1}{2}t(n,m)$. Hence, $\sum_{i=1}^n t(n,i)\leq 2 (n!\cdot n!)$.

We next consider the case where the root is not labelled with all $|n|$ states of the nondeterministic one-pair Rabin automaton $\mathcal{R}_1$. Let $t'(n,m)$ denote the number of history tree / LIR pairs for such history trees with $m$ nodes for a nondeterministic one-pair Rabin automaton with $n$ states.
We have $t'(n,n) = (n-1)!n!$ and, by an argument similar to the one used in the analysis of $t$, we also have $t'(n,m-1)\leq \frac{1}{2}t'(n,m)$ for every $m \leq n$, and hence $\sum_{i=1}^{n-1} t'(n,i)\leq 2 (n-1)!n!$.

Overall, the number of RHTs augmented with LIRs is $\sum_{i=1}^{n}{t(n,i)} + \sum_{i=1}^{n-1}{t'(n,i)} \le O(n!^2)$. The number of states of the resulting deterministic parity automaton is $O(n!^2)$, which equates to a linear increase in size when compared with a deterministic parity automaton resulting from the determinisation of a language equivalent nondeterministic \buchi\ automaton instead of a nondeterministic one-pair Rabin automaton.%
\footnote{A similar estimation for the case of determinising \buchi\ automata to parity automata would result in $O\big((n-1)!n!\big)$ states, when the acceptance condition is placed on the transitions rather than the states.} 
\end{proof}
 
\section{Lower Bound}
In this section, we establish the optimality of our determinisation to Rabin automata, and show that our determinisation to parity automata is optimal up to a small constant factor.
What is more, this lower bound extends to the more liberal Streett acceptance condition.

The technique we employ is similar to \cite{Colcombet+Zdanowski/09/Buchi,Schewe+Varghese/12/GBA}, in that we use the states (for Rabin automata) or a large share of the states (for parity automata) of the resulting automaton as memory in a game, and argue that it can be won, but not with less memory.
Just as in \cite{Colcombet+Zdanowski/09/Buchi,Schewe+Varghese/12/GBA}, we use a game where this memory is a lower bound for the size of a deterministic Rabin automaton that recognises the language of a full nondeterministic automaton (see below).
To estimate the size of the minimal Streett automaton, we use the complement language instead.
Consequently, we get a dual result:
a lower bound for a Rabin automaton that recognises the complement language.
By duality, this bound is also the lower bound for a deterministic Streett automaton that recognises the language of this full automaton.
As the parity condition is a special Streett condition, this lower bound extends to parity automata.

\subsection{Full automata}
Our lower bound proof builds on \emph{full} automata (cf.\ \cite{Yan/08/lowerComplexity}), like the ones used in \cite{Colcombet+Zdanowski/09/Buchi} to establish a lower bound for the translation from nondeterministic B\"uchi to deterministic Rabin automata.
% A \buchi\ automaton $\mathcal{B}_n = (Q,\Sigma_n,I,T,F)$ is called \emph{full} if
% \begin{itemize}
%  \item $\Sigma_n= 2^{Q \times Q}$,
%  $|Q|=n$, and 
%  $I=Q$, and
% \item $T = \{(q,\sigma,q')\mid \ (q,q') \in \sigma\}$
% \end{itemize}
% 
% As each nondeterministic \buchi\ automaton with $n$ states can be viewed as a language restriction (by alphabet projection) of a full automaton, full automata are useful tools in establishing lower bounds.
%
A parity automaton  $\mathcal P_n^c = \big(Q,\Sigma_n^c,I,T,\pri\big)$ with $n$ states is called \emph{full} if
its alphabet $\Sigma_n^c= Q \times Q^\top \rightarrow 2^\nc$ is the set of functions from $Q\times Q^\top$ to sets of priorities $\nc$, and
\begin{itemize}
 \item $I=Q$,
 \item $T = \big\{\big(q,\sigma,q') \mid  q \in Q,\ q'\in Q^\top,\ \sigma(q,q') \neq \emptyset\big\}$,
 \item $\pri: (q,\sigma,q') \mapsto \opt\big(\sigma(q,q')\big)$ for all $q,q' \in Q$ with $\sigma(q,q')\neq \emptyset$, where $\opt$ returns the highest even number of a set, and the lowest odd number if the set contains no even numbers.
\end{itemize}

$(q,\sigma,\top)$ encodes immediate acceptance from state $q$. 
Every nondeterministic parity automaton with priorities $\leq c$ can be viewed as a language restriction (by alphabet restriction) of $\mathcal P_n^c$.
$\mathcal P_n^c$ therefore recognises the hardest language recognisable by a parity automata with $n$ states and maximal priority $c$.

To estimate the size of deterministic Rabin, Streett, or parity automata that recognise the same language as a nondeterministic parity automaton with $n$ states and maximal priority $c$ reduces to estimating the size of the deterministic Rabin, Streett, or parity automata that recognises the language of $\mathcal P_n^c$.
A useful property of this language is that we can focus on states with different sets of reachable states independently.
We use $\reach(u)$ to denote the states reachable by a word $u \in \Sigma^*$ in $\mathcal P_n^c$ that is not immediately accepted by $\mathcal P_n^c$ (that is, such that $\top$ is not reachable on $u$).
$\reach(u)$ can be defined inductively: $\reach(\varepsilon)=I$, and $\reach(va) = \big\{q' \in Q \mid \exists q \in \reach(v).\ (q,\sigma,q') \in T\big\}$ for all words  $v\in \Sigma^*$ and all letters $a \in \Sigma$.
This allows us to extend a useful observation from \cite{Colcombet+Zdanowski/09/Buchi}.

\paragraph*{Notation.}
In this section, we use $\rho(s,u)$ to refer to a finite part of the run of a deterministic automaton that starts in a state $s$ upon reading a word $u \in {\Sigma_n^c}^+$.
If this finite part of the run ends in a state $s'$, we also write $\rho(s,u,s')$.
In particular, $\rho(s,u,s')$ implies that $s'$ is reached from $s$ when reading $u$.
For Rabin automata, an index $i$ is accepting resp.\ rejecting for $\rho(s,u,s')$, if it is accepting resp.\ rejecting in some transition in this sequence of a run.
For parity automata, the co-priority of $\rho(s,u,s')$ is the smallest co-priority that occurs in any transition in the respective sequence of a run.

\begin{lemma}
\label{lem:differentSets}
Let $\mathcal A$ be a deterministic Rabin, Street, or parity (or, more generally, Muller) automaton that recognises the language of $\mathcal P_n^c$ or its complement.
Then $\rho(s_0,u,s)$ and $\rho(s_0,v,s)$ imply $\reach(u) = \reach(v)$ or $\reach(u) \ni \top \in \reach(v)$.
\end{lemma}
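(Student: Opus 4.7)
The plan is to proceed by contradiction: suppose $\rho(s_0,u,s)$ and $\rho(s_0,v,s)$ both hold but the conclusion fails. Then $\reach(u) \neq \reach(v)$ and at least one of $u, v$ does not reach $\top$ in $\mathcal{P}_n^c$, which by symmetry leaves two scenarios: (i) $\top$ is reachable on exactly one of them, WLOG on $u$ but not on $v$; or (ii) $\top$ is reachable on neither, and, again WLOG, some state $q$ lies in $\reach(u) \setminus \reach(v)$.

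The idea in both cases is to cook up a single letter $\sigma \in \Sigma_n^c$ such that the infinite continuation $\sigma^\omega$ separates $u$ from $v$; here I exploit the fact that in a full automaton we have free control over the transition structure of every letter. In case (i), I would let $\sigma$ be the empty function, so that $T$ contains no $\sigma$-transition from any state in $Q$. In case (ii), I would let $\sigma(q,\top) = [c]$ and $\sigma(q',q'') = \emptyset$ for every other argument; the priority of the new transition is irrelevant, since the run is accepting the moment it reaches $\top$.

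With $\sigma$ in hand, I would check that $u\sigma^\omega \in L(\mathcal{P}_n^c)$ while $v\sigma^\omega \notin L(\mathcal{P}_n^c)$. In case (i), $u$ already admits a run that escapes to $\top$ before $\sigma^\omega$ is read and remains there, whereas every run on $v$ stays in $Q$ until position $|v|$ and then blocks at the first $\sigma$. In case (ii), the run on $u$ that reaches $q$ extends via $(q,\sigma,\top)$ and stays in $\top$, while every run on $v$ ends at some state of $\reach(v) \not\ni q$ and therefore has no $\sigma$-successor.

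To conclude, I would invoke determinism of $\mathcal A$: the runs of $\mathcal A$ on $u\sigma^\omega$ and on $v\sigma^\omega$ both pass through $s$ after their respective prefixes and then follow the unique run of $\mathcal A$ on $\sigma^\omega$ from $s$, hence $\mathcal A$ accepts both or neither. Since the two words have opposite membership in $L(\mathcal{P}_n^c)$, they also have opposite membership in its complement, so $\mathcal A$ can recognise neither language, a contradiction. The main obstacle is simply keeping the two $\top$-related cases cleanly separated; once the case split is in place, fullness of $\mathcal{P}_n^c$ makes the construction of $\sigma$ entirely routine.
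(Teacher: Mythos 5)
Your proof is correct and follows essentially the same route as the paper's: assume the conclusion fails, construct a single letter $\sigma$ whose infinite repetition separates $u\sigma^\omega$ from $v\sigma^\omega$ with respect to $L(\mathcal P_n^c)$, and conclude from determinism of $\mathcal A$ (both runs passing through $s$) that $\mathcal A$ must treat the two words identically, a contradiction. The only cosmetic difference is in the case $q \in \reach(u)\smallsetminus\reach(v)$, where the paper installs a self-loop $\sigma_q(q,q)=\{2\}$ of even priority instead of your transition to $\top$; both choices work.
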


\begin{proof}
Assume for contradiction that this is not the case.
We select two words $u,v \in \Sigma^*$ with $\rho(s_0,u,s)$ and $\rho(s_0,v,s)$.
Let $\sigma_\emptyset : (q,q') \mapsto \emptyset \forall (q,q')\in Q\times Q^\top$.

If $\reach(u) \ni \top \notin \reach(v)$, then $v {\sigma_\emptyset}^\omega$ is accepted, and $u {\sigma_\emptyset}^\omega$ is rejected by $\mathcal P_n^c$.

If $\top \notin \reach(u) \cup \reach(v)$ and $q \in \reach(u)\smallsetminus \reach(v)$, then we use $\sigma_q : (q,q)  \mapsto \{2\}$ and $\sigma_q : (q',q'')  \mapsto \emptyset$ if $(q',q'') \neq (q,q)$.
Then $u {\sigma_q}^\omega$ is accepted, while $ v {\sigma_q}^\omega$ is rejected by $P_n^c$.

Doing the same with $u$ and $v$ reversed provides us with the required contradiction.
\end{proof}

As a consequence, we can focus on sets of states with the same reachability set.

\subsection{Language games}
A language game is an initialised two player game $G=(V,E,v_0,\mathcal L)$, which is played between a verifier and a spoiler on a star-shaped directed labelled multi-graph $(V,E)$ without self-loops.
It has a finite set $V$ of vertices, but a potentially infinite set of edges.

The centre of the star, which we refer to by $c \in V$, is the only vertex of the verifier, while all other vertices are owned by the spoiler.
Besides the centre, the game has a second distinguished vertex, the initial vertex $v_0$, where a play of the game starts.
The remaining vertices $W = V \smallsetminus \{v_0,c\}$ are called the working vertices.
Like $v_0$, they are owned by the spoiler.

The edges are labelled by finite words over an alphabet $\Sigma$.
Edges leaving the centre vertex are labelled by the empty word $\varepsilon$, and there is exactly one edge leaving from the edge to each working vertex, and no outgoing edge to the initial vertex.
The set of these outgoing edges is thus $\{(c,\varepsilon,v)\mid v \in W\}$.
The edges that lead to the centre vertex are labelled with non-empty words.

The players play out a run of the game in the usual way by placing a pebble on the initial vertex $v_0$, letting the owner of that vertex select an outgoing edge, moving the pebble along it, and so forth.
This way, an infinite sequence of edges is produced, and concatenating the finite words by which they are labelled provides an infinite word $w$ over $\Sigma$.
The verifier has the objective to construct a word in $\mathcal L$, while the spoiler has the antagonistic objective to construct a word in $\Sigma^\omega \smallsetminus \mathcal L$.

\begin{theorem}
\cite{Colcombet+Zdanowski/09/Buchi}
\label{theo:Lgame}
If the verifier wins a language game for a language recognised by a deterministic Rabin automaton $\mathcal R$ with $r$ states, then he wins the language game using a strategy with memory $r$.
\end{theorem}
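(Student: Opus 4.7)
The plan is to form the product of the language game with the given deterministic Rabin automaton $\mathcal R$ and then invoke memoryless determinacy of Rabin games. Write $\mathcal R = (S,\Sigma,s_0,\delta,\{(A_i,R_i) \mid i \in J\})$ with $|S| = r$, and extend $\delta$ to finite words in the standard way.

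First I would build a product game whose positions are pairs $(v,s) \in V \times S$ of a game vertex and an $\mathcal R$-state, with ownership inherited from $V$. Each edge $(v,w,v')$ of the language game lifts to an edge $((v,s),(v',\delta(s,w)))$, so that the second component tracks the state reached by $\mathcal R$ after reading the word produced so far. The winning condition for the verifier in the product game is the Rabin condition $\{(A_i,R_i) \mid i \in J\}$ read off the $\mathcal R$-component. Because $\mathcal R$ is deterministic and recognises $\mathcal L$, a play produces a word in $\mathcal L$ exactly when the induced sequence of $\mathcal R$-transitions is Rabin-accepting. Hence the verifier wins the original language game from $v_0$ if and only if she wins the product game from $(v_0,s_0)$.

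Next I would appeal to the classical memoryless determinacy of Rabin games (Emerson--Jutla): on any Rabin game, the Rabin player admits a positional winning strategy. Although the language game has infinitely many edges, it is \emph{finite at the verifier}: her only positions are $(c,s)$ for $s \in S$, and she has at most $|W|$ choices at each. Memoryless determinacy for the Rabin player is insensitive to the branching at the opponent's nodes, so the verifier still has a positional winning strategy in the product.

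Finally, I would pull this positional strategy back to the original game: the memory element is the current $\mathcal R$-state, updated deterministically by the (non-empty) word labelling each edge used by the spoiler, and on every visit to $c$ with memory $s$ the verifier plays the move prescribed by the positional strategy at $(c,s)$. The result is a winning strategy with memory of size $r$. The main obstacle is invoking positional determinacy of Rabin games on an arena that is only locally finite at the Rabin player; an alternative route is a direct latest-appearance-record argument in the style of Colcombet and Zdanowski that converts any winning strategy into one whose state depends only on the current $\mathcal R$-state, but the product-and-positional-determinacy route seems cleanest.
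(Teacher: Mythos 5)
Your proposal is correct and follows essentially the same route as the paper: build the product of the game arena with $\mathcal R$, observe it is a Rabin game for the verifier, invoke memoryless determinacy for the Rabin player (the paper cites Klarlund and Zielonka, whose results cover arenas with infinite branching at the opponent's vertices, which settles the local-finiteness worry you raise), and use the current $\mathcal R$-state as the memory. No gaps.
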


This is because he can simply run $\mathcal R$ as a witness automaton.
Intuitively, the verifier would play on the product of $\mathcal R$ and $G$.
This is a Rabin game, and if the verifier wins, then he wins memoryless \cite{DBLP:journals/apal/Klarlund94,Zielonka/98/Parity}.
Thus, the states of $\mathcal R$ can serve as the memory in $G$:
the verifier will simply make the decision defined by the decision he made in the product game.

\begin{corollary}
\label{cor:Lgame}
If the verifier wins a language game for a language recognised by a deterministic Rabin automaton $\mathcal R$ with $r < |W|$ states, then he wins the language game played on a reduced graph, where the set of his outgoing edges is reduced to $r$ edges of his choice before playing the otherwise unchanged game.
\end{corollary}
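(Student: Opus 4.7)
The plan is to leverage Theorem~\ref{theo:Lgame} directly: it supplies a winning strategy for the verifier whose memory is bounded by $r$. I would first fix such a memory structure $M$ with $|M|=r$, together with a memory update function and a move function that, at the centre vertex $c$, deterministically selects an outgoing edge based solely on the current memory state (since $c$ itself carries no further distinguishing information).

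Next, I would observe that because the move function at $c$ is a function of the memory state alone, the set of edges outgoing from $c$ that are ever used in any play consistent with this strategy has cardinality at most $|M|=r$. Let $E_c\subseteq\{(c,\varepsilon,v)\mid v\in W\}$ denote this set of at most $r$ chosen edges; its choice is determined by the strategy and can be computed in advance from $\mathcal R$ and the game graph, before any play begins.

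I would then define the verifier's strategy in the reduced game to be the restriction of the original strategy: commit to $E_c$ as the preselected set of outgoing edges, and, during play, use exactly the same memory-based move function at $c$. Any play of the reduced game consistent with this strategy is also a play of the original game consistent with the Theorem~\ref{theo:Lgame} strategy, because the strategy never wished to use any edge outside $E_c$ in the first place. Consequently, the resulting infinite word lies in $\mathcal L$, so the verifier wins the reduced game.

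The only mildly delicate point is justifying that the move at $c$ really depends only on the memory state and not on additional game-state information: this is immediate since $c$ is a single vertex, and the standard memory-strategy formulation (from the Rabin-game determinacy result~\cite{DBLP:journals/apal/Klarlund94,Zielonka/98/Parity} used in Theorem~\ref{theo:Lgame}) gives a move function whose inputs are (vertex, memory). Once this is noted, the rest is bookkeeping, and no new combinatorial argument is required.
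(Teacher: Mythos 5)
Your argument is exactly the paper's: the memory-$r$ strategy from Theorem~\ref{theo:Lgame} selects the edge at the centre vertex as a function of the memory state alone, so at most $r$ distinct outgoing edges are ever used, and preselecting these preserves the win. The paper states this in one sentence; your write-up just makes the same bookkeeping explicit.
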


These are simply the at most $r$ edges chosen by the verifier under the at most $r$ different memory states.

\subsection{Lower bounds}
We extend the technique introduced by Colcombet and Zdanowski \cite{Colcombet+Zdanowski/09/Buchi} to establish that the Rabin automata from Corollary \ref{parityeq} are the minimal deterministic Rabin automata that recognise the same language as $\mathcal P_n^c$.
Just as in \cite{Colcombet+Zdanowski/09/Buchi,Schewe+Varghese/12/GBA}, we use the language of $\mathcal P_n^c$ as the target language.

To establish that the deterministic parity automaton $\mathcal D_n^c$ from Lemma \ref{parityexp} cannot be $50\%$ larger than any deterministic Streett -- and thus in particular than any deterministic parity -- automaton that recognises the language of $\mathcal D_n^c$, we use the complement language of $\mathcal P_n^c$ as our target language.

We therefore get a bound on the size of the smallest Rabin automaton that recognises the complement of the language of $\mathcal P_n^c$, and hence for the smallest Streett automaton that recognises $\mathcal P_n^c$.
Having an upper bound for parity that matches this lower bound for the more general Streett condition, we can infer tightness of our determinisation construction for both classes of automata.

\paragraph*{\bf Deterministic Rabin automata.}
To establish the lower bound, it is easier to use triples $(S_v,c_v,P_v)$ for each node $v \in \T$ of an NHT $(\T,l,\lambda)$.
By abuse of notation, we refer to the triple by $\T(v)$ (and thus to the state of the DRA by $\T$), to label by $T_S(v)$ and to $\{q {\,\in\,} Q \mid v=\host_{\lambda(v)}(q,\T)\}$ by~$\T_P(v)$.

To define the edges leaving a spoiler vertex $\T$, we refer to the finite part of a run of $\mathcal R_n^c$ that starts in $\T$ when reading a word $u \in {\Sigma_n^c}^+$ by $\rho(\T,u)$.
If this finite part of the run ends in $\T'$, we also write $\rho(\T,u,\T')$.
In particular, $\rho(\T,u,\T')$ implies that $\T'$ is reached from $\T$ when reading $u$.
The accepting and rejecting nodes of $\rho(\T,u,\T')$ are the union of the accepting and rejecting nodes, respectively, of the individual transitions in this section of the run.

\begin{definition}
[Relevant change]
In a finite part $\rho(\T,u,\T')$, of our Rabin automaton $\mathcal R_n^c$ the \emph{relevant change} is the minimal position $\p$ w.r.t.\ lexicographic order, where
\begin{itemize}
 \item the node has been accepting or rejecting during the piece of the run, or
 \item where $\T(\p)\neq \T'(\p)$.
\end{itemize}
We call the node $\p$ the relevant change, and we call it
\begin{itemize}
 \item \emph{rejecting}, if $\rho(\T,u,\T')$ is rejecting at $\p$,
 \item \emph{accepting}, if $\rho(\T,u,\T')$ is accepting but not rejecting at $\p$,
 \item \emph{growing}, if it is not rejecting and $\T_S'(\p)\supsetneq \T_S(\p)$, and
 \item \emph{shrinking}, if it is not rejecting, $\T_S'(\p) = \T_S(\p)$ and $\T_P'(\p) \subsetneq \T_P(\p)$.
\end{itemize}
\end{definition}
 
We use a set of language games, one for each subset $S\subseteq Q$ of the states $Q$ of $\mathcal P_n^c$ with two or more states.
The vertices of such a language game consist of the centre vertex, the initial vertex, and the working states $W$.
These working states consist of the states of $\mathcal R_n^c$ with $\reach(\T)=S$.
The target language is the language of all words \emph{accepted} by $\mathcal R_n^c$, and we have the following edges:

\begin{itemize}
\item there is an edge $(v_0,u,c)$ for all $u \in {\Sigma_n^c}^+$ with $\rho(\T_0,u,\T)$ and $\T \in W$,
\item $(c,\varepsilon,\T)$ for all $\T \in W$, and
\item $(\T,u,c)$ if $\rho(\T,u,\T')$ is accepting, growing, or shrinking, and $\T' \in W$.
\end{itemize}
 
To establish that the minimal Rabin automaton that recognises the language of $\mathcal R_n^c$ cannot be smaller than $R_n^c$, we show that the verifier needs all edges to win each of these games.

\begin{lemma}
\label{lem:rwin}
The verifier wins these language games.
\end{lemma}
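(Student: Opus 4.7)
\begin{proofidea}
The plan is to exhibit a winning strategy for the verifier that shadows the actual run of $\mathcal R_n^c$, and then to show that any play consistent with this strategy yields a word in $\mathcal L(\mathcal R_n^c)$ by a lexicographic-minimum argument on relevant-change nodes.

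The strategy is as follows: at each visit to the centre $c$, the verifier picks the edge to that $\T \in W$ which equals the actual state of $\mathcal R_n^c$ after reading the word produced so far. This is well-defined because every edge entering $c$ (from $v_0$ or from some $\T_i \in W$) is labelled by a word ending in an actual state that lies in $W$, so $\reach$ is preserved throughout. Consequently, the play traces a genuine run of $\mathcal R_n^c$ decomposed into finite segments $\rho(\T^{(i)}, u_i, \T^{(i+1)})$, each of which, by the definition of spoiler edges, has a relevant change that is accepting, growing, or shrinking, but never rejecting. If the spoiler ever lacks an outgoing edge, the verifier wins by default, so we may assume the play is infinite.

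Let $\p^\star$ be the lexicographically smallest node that occurs as the relevant change of infinitely many segments; such a node exists because the set of nodes appearing in an NHT of level $e$ over $n$ states is finite. I claim $\p^\star$ witnesses Rabin acceptance. For the rejecting direction, if some segment has $\p^\star$ rejecting, then the relevant change of that segment lies lexicographically $\leq \p^\star$; equality is excluded because spoiler edges forbid rejecting relevant changes, and strict inequality contributes only finitely many events by the choice of $\p^\star$. For the accepting direction, after a finite prefix the only relevant changes are at positions $\geq \p^\star$, and those strictly above $\p^\star$ leave both $\T_S(\p^\star)$ and $\T_P(\p^\star)$ unchanged by lex-minimality of the relevant change. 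If accepting at $\p^\star$ occurred only finitely often, then from some point on all events at $\p^\star$ are growing ($\T_S(\p^\star)$ strictly grows) or shrinking ($\T_S(\p^\star)$ fixed, $\T_P(\p^\star)$ strictly shrinks). Since $\T_S(\p^\star) \subseteq Q$ is bounded, growing can occur only finitely often; once $\T_S(\p^\star)$ stabilises, only shrinking remains, which strictly decreases $\T_P(\p^\star) \subseteq Q$, a well-founded process. This contradicts the infinite occurrence at $\p^\star$, so accepting happens infinitely often. The Rabin pair at index $\p^\star$ is therefore satisfied, the produced word lies in $\mathcal L(\mathcal R_n^c)$, and the verifier wins. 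The main obstacle is the well-foundedness step: one must isolate the evolution of $(\T_S(\p^\star), \T_P(\p^\star))$ across segments, exploiting that relevant changes at lex-higher positions freeze this pair while those at $\p^\star$ are confined to the three controlled categories.
\end{proofidea}
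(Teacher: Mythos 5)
Your proposal is correct and follows essentially the same route as the paper's proof: the verifier shadows the run of $\mathcal R_n^c$ as a witness automaton, and acceptance is established by taking the lexicographically minimal relevant change occurring infinitely often and deriving a contradiction from the boundedness of the growing events followed by the well-foundedness of the shrinking events. Your explicit handling of why that minimal node is eventually never rejecting is slightly more detailed than the paper's, but the argument is the same.
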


For this, we recall the structure of the strategy that the verifier applies: he would use $\mathcal R_n^c$ as a witness automaton, moving to the vertex that represents the state $\T$ that $\mathcal R_n^c$ would be in upon reading the finite word produced so far.
 
If one of his outgoing edges is removed, then there is one such state, say $\T$, he cannot respond to properly.
Instead, he would have to go to a different state $\T'$.
We show that, irrespective of the states $\T$ and $\T' \neq \T$ chosen, the spoiler can produce a word $u\in {\Sigma_n^c}^+$ such that $(\T,u,c)$ is an edge in $G$ and $\rho(\T',u,\T)$ is not accepting in any position.
 
If the spoiler has such an option, then \emph{she} can use $\mathcal R_n^c$ as a witness automaton:
whenever it is her move, % and she has to decide on an edge,
she chooses an edge with the properties described above.

\begin{proof}
The verifier can simply use the strategy to monitor the state that the monitor DRA $\mathcal R_n^c$ from Corollary \ref{parityeq} would be in.
He then has the winning strategy to play $(c,\varepsilon,\T)$ when the automaton is in state $\T$.
 
To see that he wins the game with this strategy, we consider the run of $\mathcal R_n^c$ on the word defined by the play $(v_0,u_0,c) (c,\varepsilon,\T_1) (\T_1,u_1,c) (c,\varepsilon,\T_2) \ldots $, which refers to the word $u_0u_1u_2\ldots$.
 
The segments $\rho(\T_i,u_i,\T_{i+1})$ of the run have, for all $i \geq 1$, an accepting, growing, or purifying relevant change.
 
Let us consider the relevant changes of these segments.
There is a -- with respect to lexicographic order -- minimal one $\p_{\min}$ that occurs infinitely often.
Let us choose a position $i$ in the play such that no lexicographic smaller $\p'$ is henceforth a relevant change.
Then no node smaller than or equal to $\p$ (with respect to the lexicographic order) can henceforth be rejecting.

Clearly, if $\p$ is infinitely often accepting, then the verifier wins.

Let us assume for contradiction that there is a $j>i$ such that $\p$ is not accepting from position $j$ onwards.
Then, the set of states in $\T_l(v)$ must henceforth grow monotonously with $l$, and grow strictly every time $\p$ is growing.
As this can only happen finitely many times, there is a $k>j$ such that $\p$ is henceforth neither accepting nor growing.

Then, the set of pure states in $\T_l(v)$ must henceforth shrink monotonously with $l$, and shrink strictly every time $\p$ is shrinking.

As this can only happen finitely often, this provides us with the required contradiction.
\end{proof}
 
\begin{lemma}
\label{lem:rwillwin}
Let $\T$ and $\T'$ be two different states of $\mathcal R_n^c$ with $\reach(\T)=\reach(\T')$.
Then there is a word $u\in {\Sigma_n^c}^+$ such that no node in $\rho(\T,u,\T)$ is accepting and $\rho(\T',u,\T)$ has an accepting, growing, or shrinking relevant change.
\end{lemma}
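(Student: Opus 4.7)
The plan is to construct the word $u$ explicitly, using the freedom of the alphabet $\Sigma_n^c$, which admits any function $Q\times Q^\top \to 2^\nc$ as a letter. Set $S = \reach(\T) = \reach(\T')$ and let $\p$ be the lex-minimal node at which the NHTs $\T$ and $\T'$ disagree; since their root labels both equal $S$, we have $\p \neq \varepsilon$.

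I would first set up two building blocks. The \emph{identity} letter $\sigma_\id$, defined by $\sigma_\id(q,q) = \{1\}$ for $q\in S$ and empty elsewhere, has its only priority $1$ in $N_a$ for every level $a\geq 2$; applying it keeps every label fixed, spawns only empty children (pruned at step~5), and yields $\T''\mapsto\T''$ for every NHT $\T''$ with root label $S$, with no accepting and no rejecting event at any node. The second building block is a \emph{shape-driving} word $w$ that, from any NHT with root label $S$, reaches exactly $\T$: it is built by a sequence of letters with appropriate even priorities that split labels into the children prescribed by $\T$, and is designed to keep a \emph{witness state} at each parent outside every child's label throughout, so that no vertical-pruning breakpoint (step~4) is ever triggered along $\rho(\T, w, \T)$; hence no node is accepting in that run.

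The candidate is $u = u_0 \cdot w$, where the short prefix $u_0$ is tailored to the disagreement at $\p$. By construction, $\rho(\T, u, \T)$ visits no accepting node: $u_0$ is chosen to be neutral from $\T$, and $\rho(\T, w, \T)$ is non-accepting by the witness-state argument. For $\rho(\T', u, \T)$, a case analysis on the nature of the disagreement at $\p$ (different labels, different pure sets, or $\p$ present in only one of $\T, \T'$) produces a well-chosen $u_0$ that forces at position $\p$ either a growing change, a shrinking change, or an accepting breakpoint from $\T'$, while remaining neutral from $\T$. Since $\p$ is the lex-minimal disagreement, no position strictly smaller than $\p$ can be a source of events in either run.

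The main obstacle is controlling the lexicographic order of events in $\rho(\T', u, \T)$: we must ensure the earliest non-neutral event is benign (accepting, growing, or shrinking) and never rejecting. This constrains the case analysis for $u_0$ so that at position $\p$, the event from $\T'$ is the first non-neutral one in the entire run; and it constrains the design of $w$ so that rejecting events in the rebuild phase only appear at positions strictly larger than $\p$. The witness-state discipline during $w$ also ensures that the shape-driving phase never unintentionally triggers an accepting breakpoint from $\T$, closing the loop $\rho(\T,u,\T)$ without any accepting event.
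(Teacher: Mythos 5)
Your architecture matches the paper's: locate the lexicographically minimal position $\p$ at which $\T$ and $\T'$ disagree, fire one tailored letter that is harmless from $\T$ but forces an accepting, growing, or shrinking event at $\p$ from $\T'$, then rebuild $\T$ while keeping all positions up to $\p$ stable and event-free. The problem is that essentially all of the content of the lemma lives in the step you defer to ``a well-chosen $u_0$''. You never enumerate what the disagreements at $\p$ can be, nor exhibit the letters. The paper needs an exhaustive four-way split --- (1) some $s \in S_\p' \smallsetminus S_\p$, (2) $S_\p' = S_\p$ and some $s \in P_\p \smallsetminus P_\p'$, (3) $S_\p \supsetneq S_\p'$, (4) $S_\p' = S_\p$ and $P_\p \subsetneq P_\p'$ --- together with a concrete priority assignment in each case (the even priority $c_\p$ out of the distinguished state $s$ in cases (1) and (3), the odd priority $c_\p - 1$ in cases (2) and (4), and, crucially, odd priorities $c_{\p'}-1$ routing the hosted states of every smaller position $\p'$ onto $S_{\p'}$ so that those positions remain stable and neutral). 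Your closing remark that ``since $\p$ is the lex-minimal disagreement, no position strictly smaller than $\p$ can be a source of events'' is not automatic; it holds only because every letter is engineered this way. Without the cases and the letters, the claim that the earliest event in $\rho(\T',u,\T)$ is benign is asserted rather than proved.

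Two design choices in your plan would also need repair. First, you require $u_0$ to be \emph{neutral} from $\T$. The lemma only forbids \emph{accepting} nodes in $\rho(\T,u,\T)$, and in the case $s \in S_\p' \smallsetminus S_\p$ the distinguishing letter is necessarily \emph{rejecting} at $\p$ (and at all positions outside the stable prefix) when read from $\T$: the very transitions that make $\p$ accepting from $\T'$ empty out or reorder the corresponding nodes of $\T$. Insisting on neutrality from $\T$ over-constrains the construction and is likely unachievable in that case. Second, your shape-driving word $w$ is claimed to reach $\T$ from \emph{any} NHT with root label $S$; the paper avoids having to prove such a universal rebuild by arranging $\Delta(\T,\sigma)=\Delta(\T',\sigma)$ already after the first letter, so that the two runs coincide throughout the reconstruction and only a single rebuild, from a single known state, has to be analysed for the absence of accepting events at positions $\leq \p$.
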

 
\begin{proof}
We first identify the minimal position $\p$ in which $\T$ and $\T'$ are different, and the set $P$ of all lexicographic smaller positions that are part of $\T$ (and thus of $\T'$).
 
We use a word $u = \sigma v$ that consists of an initial letter, in which all nodes but $P$ are rejecting when staring in $\T$ and the nodes in $P$ are neither accepting nor rejecting when staring in $\T$ or $\T'$.
In the second phase, we re-build $\T$ without making any node in $P$ accepting or rejecting.

Now let $\T(\p) = (S_\p,c_\p,P_\p)$, $\T'(\p) = (S_\p',c_\p,P_\p')$, and $\T(\p) = (S_\p,c_\p,P_\p)$ for all $\p' \in P$.
(Recall that the priority is defined by the position in the tree.)

We now distinguish four cases:
(1) there is an $s \in S_\p' \smallsetminus S_\p$, (2) $S_\p' = S_\p$ and there is an $s \in P_\p \smallsetminus P_\p'$, (3)
$S_\p \supsetneq S_\p'$, and (4) $S_\p' = S_\p$ and $P_\p \subsetneq P_\p'$.

We select the first letter of our word as follows.
\begin{enumerate}
\item If there is an $s \in S_\p' \smallsetminus S_\p$, then we can fix such an $s$ and select the first letter of our word as follows:
\begin{itemize}
\item We let $c_\p \in \sigma(s,s')$ for all $s' \in S_\p$.
\item For all $\p' \in P$, all $s' \in P_{\p'}$, and ll $s'' \in S_{\p'}$, we let $c_{\p'}-1 \in \sigma(s',s'')$.
\item If $c$ is odd, we let $c \in \sigma(s',s'')$ for all $s',s'' \in  \reach(\T)$ (in order to maintain the set of reachable states).
\end{itemize}
No further priority is included in any set $\sigma(s',s'')$ for $s',s'' \in Q$ and $s'' \in Q^\top$.

This letter $\sigma$ is chosen such that no node in $P$ is accepting or rejecting in $\delta(\T,\sigma)$ or $\delta(\T',\sigma)$.
While $\p$ is accepting in $\delta(\T',\sigma)$, it is rejecting in $\delta(\T,\sigma)$.
All other positions are rejecting in these transitions.

\item If $S_\p' = S_\p$ and there is an $s \in P_\p \smallsetminus P_\p'$, then we can fix such an $s$ and select the first letter of our word as follows:
 
\begin{itemize}
\item We let $c_\p-1 \in \sigma(s,s')$ for all $s' \in S_p$.
\item For all $\p' \in P$, all $s' \in P_{\p'}$, and ll $s'' \in S_{\p'}$, we let $c_{\p'}-1 \in \sigma(s',s'')$.
\item If $c$ is odd, we let $c \in \sigma(s',s'')$ for all $s',s'' \in  \reach(\T)$ (in order to maintain the set of reachable states).
\end{itemize}
No further priority is included in any set $\sigma(s',s'')$ for $s',s'' \in Q$ and $s'' \in Q^\top$.
 
This letter $\sigma$ is chosen such that no node in $P$ is accepting or rejecting in $\delta(\T,\sigma)$ or $\delta(\T',\sigma)$.
While $\p$ is accepting in $\delta(\T',\sigma)$, it is neither accepting nor rejecting in $\delta(\T,\sigma)$.
All other positions are rejecting in these transitions.

\item If $S_\p \supsetneq S_\p'$, then we can fix an $s \in P_\p$ and select the first letter of our word as follows:
\begin{itemize}
\item We let $c_\p-1 \in \sigma(s,s')$ for all $s' \in S_\p$.
\item For all $\p' \in P$, all $s' \in P_{\p'}$, and ll $s'' \in S_{\p'}$, we let $c_{\p'}-1 \in \sigma(s',s'')$.
\item If $c$ is odd, we let $c \in \sigma(s',s'')$ for all $s',s'' \in  \reach(\T)$ (in order to maintain the set of reachable states).
\end{itemize}
No further priority is included in any set $\sigma(s',s'')$ for $s',s'' \in Q$ and $s'' \in Q^\top$.

This letter $\sigma$ is chosen such that no node in $P$ is accepting or rejecting in $\delta(\T,\sigma)$ or $\delta(\T',\sigma)$.
While $\p$ is not rejecting (but may or may not be accepting) in $\delta(\T',\sigma)$, it is neither accepting nor rejecting in $\delta(\T,\sigma)$.
All other positions are rejecting in these transitions.

\item If $S_\p = S_\p'$ and $P_\p \subseteq P_\p'$, then we can fix an $s \in P_\p$ and select the first letter of our word as follows:
\begin{itemize}
\item We let $c_\p-1 \in \sigma(s,s')$ for all $s' \in S_\p$.
\item For all $\p' \in P$, all $s' \in P_{\p'}$, and ll $s'' \in S_{\p'}$, we let $c_{\p'}-1 \in \sigma(s',s'')$.
\item If $c$ is odd, we let $c \in \sigma(s',s'')$ for all $s',s'' \in  \reach(\T)$ (in order to maintain the set of reachable states).
\end{itemize}
No further priority is included in any set $\sigma(s',s'')$ for $s',s'' \in Q$ and $s'' \in Q^\top$.

This letter $\sigma$ is chosen such that neither $\p$ nor any node in $P$ is accepting or rejecting in $\delta(\T,\sigma)$ or $\delta(\T',\sigma)$.
All other positions are rejecting in these transitions.
\end{enumerate}

Note that $\Delta(\T,\sigma) = \Delta(\T',\sigma)$ holds in all those cases.
Starting with this letter, we can continue to build a word to reconstruct $\T$.
Note that all we have to avoid during this construction is to make $\p$ or a node in $P$ accepting.

The resulting fragments $\rho(\T',u,\T)$ are accepting in cases (1) and (2), growing in case (3), and shrinking in case (4), such that $(\T',u,\T)$ is a transition, while $\rho(\T,u,\T)$ does not contain any accepting node. 
\end{proof}

\begin{corollary}
\label{cor:rlose}
If any outgoing edge is removed from the verifier's centre vertex in any of these games, then the spoiler wins the language game.
\end{corollary}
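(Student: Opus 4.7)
The plan is to exhibit a winning spoiler strategy whenever an outgoing edge $(c,\varepsilon,\T^\ast)$ has been removed, by using Lemma~\ref{lem:rwillwin} to repeatedly cycle the actual run of $\mathcal R_n^c$ through the missing vertex $\T^\ast$ while firing no accepting transition.

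First, at $v_0$ the spoiler plays some word $u_0$ with $\rho(\T_0,u_0,\T^\ast)$; this edge exists in $G$ because $\T^\ast\in W$ is reachable from $\T_0$ and $\T^\ast\in W$. Thereafter each round begins at the centre $c$ with the actual state of $\mathcal R_n^c$ being $\T^\ast$. The verifier must move to some $v\in W$, and as his edge to $\T^\ast$ has been removed we have $v\neq\T^\ast$. Since both $v$ and $\T^\ast$ belong to $W$, they share the common reachable set $S=\reach(v)=\reach(\T^\ast)$, so Lemma~\ref{lem:rwillwin} applies to the ordered pair $(\T^\ast,v)$. It yields a word $u$ such that $\rho(v,u,\T^\ast)$ has an accepting, growing, or shrinking relevant change---which makes $(v,u,c)$ a legal edge of $G$ because $\T^\ast\in W$---while $\rho(\T^\ast,u,\T^\ast)$ contains no accepting node. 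The spoiler plays $u$, which restores the actual state of $\mathcal R_n^c$ to $\T^\ast$, and the loop iterates indefinitely.

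Along the resulting infinite word $u_0u_1u_2\ldots$, every actual-run segment after the initial $u_0$ is of the form $\rho(\T^\ast,u_j,\T^\ast)$ and contributes no transition to any $A_i$. Consequently, for every Rabin pair $(A_i,R_i)$ only finitely many accepting transitions are ever taken (at most those occurring during $u_0$); the infinite run is therefore rejecting, the produced word lies outside $L(\mathcal R_n^c)$, and the spoiler wins the language game.

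The delicate point to keep in mind is the interpretation of Lemma~\ref{lem:rwillwin}'s phrase ``no node in $\rho(\T^\ast,u,\T^\ast)$ is accepting'': it has to exclude \emph{every} index $i\in J$, not merely the node at the relevant change, so that concatenating such segments genuinely yields a run with only finitely many accepting transitions in every pair. This follows from the node-wise definition of accepting transitions in the determinisation construction of Section~\ref{Rabindet}. The remaining ingredients---reachability of $\T^\ast$ from $\T_0$, matching reachable sets within $W$, and existence of the required edges $(v_0,u_0,c)$ and $(v,u,c)$ in $G$---are immediate from the definitions, so no further obstacle arises.
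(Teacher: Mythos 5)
Your proposal is correct and matches the paper's intended argument: the paper leaves the corollary without a separate proof, relying on the strategy sketched just before Lemma~\ref{lem:rwillwin} (the spoiler tracks $\mathcal R_n^c$, repeatedly drives the actual run back to the missing state $\T^\ast$ via the word from that lemma, so that after the initial segment no index is ever accepting and the run is rejecting for every Rabin pair). Your explicit handling of the edge legality, the role of $\reach(v)=\reach(\T^\ast)$ within $W$, and the "no node accepting" reading of the lemma is exactly the right elaboration.
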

 
Together with Lemmata \ref{lem:differentSets} and \ref{lem:rwin}, Corollary \ref{cor:rlose} provides:

\begin{theorem}
\label{theo:smallRabin}
$\mathcal R_n^c$ is the smallest deterministic Rabin automaton that recognises the language of $\mathcal P_n^c$.
\end{theorem}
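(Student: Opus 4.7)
The plan is to combine Lemma \ref{lem:differentSets}, Lemma \ref{lem:rwin}, Corollary \ref{cor:rlose}, and Corollary \ref{cor:Lgame} into a single pigeonhole argument. Suppose $\mathcal A$ is an arbitrary deterministic Rabin automaton recognising $L(\mathcal P_n^c)$. For every reachable set $S\subseteq Q$ arising as $\reach(u)$ for some word $u$ not leading to $\top$, Lemma \ref{lem:differentSets} associates with $S$ a well-defined non-empty set $\mathcal A_S$ of states of $\mathcal A$ reached by such $u$, and the collection $\{\mathcal A_S\}$ is pairwise disjoint across distinct reachable sets.

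Next, I would fix an $S$ with $|S|\geq 2$ and consider the language game $G_S$ set up before Lemma \ref{lem:rwin}. By Lemma \ref{lem:rwin} the verifier wins $G_S$, and he can realise such a strategy by running $\mathcal A$ as his memory. The crucial observation is that every edge of $G_S$ corresponds to a word that preserves the reachable set $S$: the edges out of $v_0$ and out of working vertices land in $W_S$, while the $\varepsilon$-edges at the centre $c$ leave $\mathcal A$ unchanged. Consequently, the state of $\mathcal A$ whenever the pebble sits on $c$ always lies in $\mathcal A_S$. Corollary \ref{cor:Lgame}, sharpened by this invariance, then yields a winning strategy for the verifier that uses at most $|\mathcal A_S|$ of his outgoing edges from $c$. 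Combining this with Corollary \ref{cor:rlose}, which says the spoiler wins as soon as any single outgoing edge at $c$ is unavailable, forces $|\mathcal A_S|\geq |W_S|$.

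Summing $|\mathcal A_S|\geq |W_S|$ over all $S$ with $|S|\geq 2$, and handling the few singleton reachable sets (together with the initial state of $\mathcal A$) by an analogous direct matching with the corresponding states of $\mathcal R_n^c$, gives $|\mathcal A|\geq |\mathcal R_n^c|$, which is the required minimality.

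The main obstacle I anticipate is the invariance step: verifying that $\mathcal A$'s state never escapes $\mathcal A_S$ during play of $G_S$. This is what lets Corollary \ref{cor:Lgame} be applied with the partition piece $|\mathcal A_S|$ rather than the crude global bound $|\mathcal A|$, and it is where Lemma \ref{lem:differentSets} does the essential work. Once the invariance is established the contradiction via Corollary \ref{cor:rlose} is immediate, and the rest is bookkeeping across the partition of $\mathcal A$ by reachable set.
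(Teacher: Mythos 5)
Your proposal is correct and follows essentially the same route as the paper, which derives the theorem directly from the combination of Lemma \ref{lem:differentSets}, Lemma \ref{lem:rwin}, and Corollary \ref{cor:rlose}: one language game per reachable set $S$ with $|S|\geq 2$, the disjointness of the classes $\mathcal A_S$ guaranteed by Lemma \ref{lem:differentSets}, and the memory bound from Corollary \ref{cor:Lgame} forcing $|\mathcal A_S|\geq|W_S|$, summed over $S$. The invariance step you flag as the main obstacle is exactly the role Lemma \ref{lem:differentSets} plays in the paper's argument, so no gap remains.
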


\paragraph*{\bf Deterministic parity automata.}
For our language game, we use the spiked states of $\mathcal D_n^c$ and a fresh initial vertex as the spoiler vertices.
We call a state of $\mathcal D_n^c$ \emph{spiked}, if its last position is a triple of the form $(\{q\},2,\{q\})$.
This is a mild restriction and owed to the fourth case of the proof of Lemma \ref{lem:useOfSpikes}.
Most states are spiked.

\begin{lemma}
\label{lem:spiked}
$\mathcal D_n^c$ has more than twice as many spiked as unspiked states.
\end{lemma}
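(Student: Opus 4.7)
The plan is to exhibit an injection from pairs $(N,q)$, where $N$ ranges over unspiked states and $q$ ranges over the states in the label of the last triple of $N$, into the set of spiked states, and then to exhibit at least one spiked state outside the image to get a strict factor of more than two.

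First I would observe that the last triple of every LIR-NHT lies at level $2$ with $P_k=S_k$. Any non-base, non-Rabin-root node has a stepchild which, by the ancestor-first ordering of the LIR, must appear later than its parent, while Rabin roots are omitted from the LIR representation. So the last LIR position is necessarily a base node at level $2$, and the partition condition applied vacuously at position $k$ forces $P_k=S_k$. Hence an unspiked state is exactly one whose last triple has the form $(S_k,2,S_k)$ with $|S_k|\geq 2$.

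Next I would define the spike operation. Given an unspiked $N=(S_1,c_1,P_1)\ldots (S_{k-1},c_{k-1},P_{k-1})(S_k,2,S_k)$ and $q\in S_k$, let $N_q$ be obtained from $N$ by replacing the last triple with $(S_k\smallsetminus\{q\},2,S_k\smallsetminus\{q\})$ and appending $(\{q\},2,\{q\})$. The verification that $N_q$ is a valid LIR-NHT rests on two observations: splitting the child $v_k$ of its parent into two siblings with labels $S_k\smallsetminus\{q\}$ and $\{q\}$ preserves the partition of the parent's label; and $q\in S_k$ is already contained in the pure set $P_j$ of the level-$4$ ancestor $j$ whose stepchild's label subsumes $S_k$ (or, when $e=2$, in $S_1$), which supplies the ancestral hosting condition required of the new singleton triple. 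By construction, the last triple of $N_q$ is $(\{q\},2,\{q\})$, so $N_q$ is spiked.

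Injectivity of $(N,q)\mapsto N_q$ and the counting then follow: from $N_q$, the last triple identifies $q$, the second-to-last triple identifies $S_k\smallsetminus\{q\}$ and hence $S_k$, and the first $k-1$ triples agree with those of $N$. Thus the image has cardinality $\sum_{N\text{ unspiked}}|S_k|\geq 2\cdot|\text{unspiked}|$. To upgrade the bound to strict, I would exhibit a spiked state outside the image, namely one whose last triple $(\{q\},2,\{q\})$ is an only child of its parent, so that no level-$2$ sibling can merge with it to reconstruct an unspiked pre-image. For $e=2$ the singleton LIR $(\{q\},2,\{q\})$ itself works; for $e>2$ a nested chain of stepchildren ending in a singleton at level $2$ does. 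The main obstacle will be the detailed verification that $N_q$ satisfies all five LIR-NHT constraints after the split-and-append, in particular the updated partition conditions at the ancestors of $v_k$ and the hosting condition for the new singleton triple, taking into account both parities of $c$; once this is secured, injectivity and the production of an extra spiked state are bookkeeping.
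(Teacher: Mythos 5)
Your proposal is correct and follows essentially the same route as the paper: an injection sending each unspiked state ending in $(S_k,2,S_k)$ with $|S_k|\geq 2$ to $|S_k|\geq 2$ distinct spiked states obtained by splitting off a singleton $(\{q\},2,\{q\})$ at the end, plus the observation that spiked states with $|\reach(N)|=1$ lie outside the image. The only (immaterial) difference is that the paper realises the split by spawning a new child while keeping the triple $(S_k,2,S_k\smallsetminus\{q\})$, whereas you split the node into two siblings $(S_k\smallsetminus\{q\},2,S_k\smallsetminus\{q\})(\{q\},2,\{q\})$; both yield valid LIR-NHTs and admit the same inverse, so the counting goes through identically.
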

To define the edges leaving a spoiler vertex $N$, we refer to the finite part of a run of $\mathcal D_n^c$ that starts in $N$ when reading a word $u \in {\Sigma_n^c}^+$ by $\rho(N,u)$.
If this finite part of the run ends in $N'$, we also write $\rho(N,u,N')$.
In particular, $\rho(N,u,N')$ implies that $N'$ is reached from $N$ when reading $u$.
The co-priority of $\rho(N,u,N')$ is the smallest co-priority that occurs in the respective sequence of a run.

\begin{proof}
Each unspiked state ends in a triple $(P,2,P)$ with $|P| \geq 2$.
We can simply replace it by $|P|$ pairs of triples, $(P,2,P\smallsetminus \{q\}),(\{q\},2,\{q\})$, for each $q \in P$.
The resulting state is spiked.
Each non-spiked state produced at least two spiked states, each spiked state is produced by at most one state, and not every spiked state can be produced this way, e.g., states $N$ with $|\reach(N)|=1$ cannot.
\end{proof}

% To define the edges leaving a spoiler vertex $N$ in the language games for parity and Streett automata, we refer to the finite part of a run of $\mathcal D_n^c$ that starts in $N$ when reading a word $u \in {\Sigma_n^c}^+$ by $\rho(N,u)$.
% If this finite part of the run ends in $N'$, we also write $\rho(N,u,N')$.
% In particular, $\rho(N,u,N')$ implies that $N'$ is reached from $N$ when reading $u$.
% The co-priority of $\rho(N,u,N')$ is the smallest co-priority that occurs in the respective sequence of a run.

\begin{definition}
[Relevant change]
In %a finite part
$\rho(N,u,N')$, the \emph{relevant change} is the minimal position $i$, where
\begin{itemize}
 \item the co-priority of $\rho(N,u,N')$ is $2i-1$ or $2i$  \hfill\mbox{(i.e., position $i$ was accepting or destroyed), or}
 \item the $i$-th position of $N$ and $N'$ differ.
\end{itemize}
For  $N = \big\{(S_j,c_j,P_j)\big\}_{j \leq m}$ and $N' = \big\{(S_j',c_j',P_j')\big\}_{j \leq m'}$, we call the relevant change $i$
\begin{itemize}
 \item \emph{rejecting}, if the co-priority of $\rho(N,u,N')$ is $2i-1$,
 \item \emph{shrinking}, if $S_i' \subsetneq S_i$,
 \item \emph{defying} if $S_i' = S_i$ and the co-priority of $\rho(N,u,N')$ is $2i+1$, and
 \item \emph{purifying}, if $S_i' {=} S_i$, $P_i' {\supsetneq} P_i$, and the co-priority is ${>}2i$.
\end{itemize}
\end{definition}

We use a language game, whose vertices consist of the centre vertex, the initial vertex, and and the working vertices $W$, which form a subset ot the spiked states of $\mathcal D_n^c$.
Following Lemma \ref{lem:differentSets}, we will, for each $S \subseteq Q$ with $|S| \geq 2$, use an individual game where $W$ contains a spiked state $N$ iff $S$ is the set of states reachable in $N$ ($\reach(N)=S$).
The target language is the \emph{complement} language of $\mathcal D_n^c$, and we have the following edges:
\begin{itemize}
\item there is an edge $(v_0,u,c)$ for all $u \in {\Sigma_n^c}^+$ such that there is a spiked state $N \in W$ such that $\rho(N_0,u,N)$, where $N_0$ is the initial state of $\mathcal D_n^c$,
\item $(c,\varepsilon,N)$ for all spiked states $N$ of $\mathcal D_n^c$ that are working states of the game, and
\item $(N,u,c)$ if $\rho(N,u,N')$ is rejecting, shrinking, defying, or purifying, and $N'$ is spiked.
\end{itemize}

\begin{lemma}
The verifier wins all of these language games.
\label{lem:win}
\end{lemma}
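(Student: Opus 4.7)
The plan is to verify that the verifier's natural strategy of simulating $\mathcal D_n^c$ is winning: at the centre vertex she always plays $(c, \varepsilon, N)$ where $N$ is the current state of $\mathcal D_n^c$ on the word generated so far. If at some stage the spoiler cannot move from a working vertex, she loses that play outright, so we may assume the play is infinite. It then produces a sequence of spoiler-chosen fragments $\rho(N_k, u_{k+1}, N_{k+1})$, each of type rejecting, shrinking, defying, or purifying; let $i_k$ denote its relevant-change position and $p_k$ its co-priority. Since every transition of the induced run of $\mathcal D_n^c$ lies in some fragment with co-priority at least $p_k$ and each fragment contributes at least one transition of co-priority exactly $p_k$, the $\liminf$ of transition co-priorities along the whole run equals $\liminf_k p_k$; the verifier wins as soon as this value is odd, because the word then lies in the complement of $L(\mathcal D_n^c)$.

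Set $i^{*} = \liminf_k i_k$ and fix $K$ with $i_k \geq i^{*}$ for all $k \geq K$. For $k \geq K$ no position strictly below $i^{*}$ ever changes, so no fragment can contain a transition of co-priority $\leq 2i^{*} - 2$, yielding the uniform bound $p_k \geq 2i^{*} - 1$. I split the analysis at position $i^{*}$. In Case A, infinitely many fragments are rejecting at $i^{*}$, each with $p_k = 2i^{*} - 1$, so together with the lower bound $\liminf_k p_k = 2i^{*} - 1$, which is odd.

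In Case B, rejecting at $i^{*}$ occurs only finitely often, so eventually every fragment with $i_k = i^{*}$ is shrinking, defying, or purifying. An accepting event at $i^{*}$ would give co-priority $2i^{*}$ but leave $S_{i^{*}}$ unchanged, so a shrinking fragment at $i^{*}$ cannot have $p_k = 2i^{*}$ and must satisfy $p_k \geq 2i^{*} + 1$; defying has $p_k = 2i^{*} + 1$ by definition, and purifying has $p_k > 2i^{*}$, hence $p_k \geq 2i^{*} + 1$. Fragments with $i_k > i^{*}$ leave position $i^{*}$ untouched, so they too satisfy $p_k > 2i^{*}$. Thus $p_k \geq 2i^{*} + 1$ eventually. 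To conclude, I argue that defying at $i^{*}$ occurs infinitely often: shrinking at $i^{*}$ strictly decreases $|S_{i^{*}}|$ and this quantity never grows back in Case B, so only finitely many shrinking fragments occur; once $S_{i^{*}}$ is fixed, $P_{i^{*}} \subseteq S_{i^{*}}$ can only strictly grow during purifying fragments, so if defying were finite then $P_{i^{*}}$ would eventually be monotonically non-decreasing and the purifying fragments would also be bounded by $|S_{i^{*}}|$. This would leave only finitely many events at $i^{*}$, contradicting $i^{*} = \liminf_k i_k$. Hence infinitely many defying fragments occur with $p_k = 2i^{*} + 1$, and $\liminf_k p_k = 2i^{*} + 1$, again odd.

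The principal obstacle I anticipate is justifying the co-priority lower bound $p_k \geq 2i^{*} + 1$ for shrinking fragments at $i^{*}$, which requires inspecting the breakpoint semantics of $\mathcal D_n^c$ to rule out the even value $p_k = 2i^{*}$ by showing that an accepting transition at $i^{*}$ merely absorbs children into $P_{i^{*}}$ and cannot shrink $S_{i^{*}}$. The second delicate point is the capacity-based exclusion of infinitely many purifying fragments without defying; once these two checks are in place, the case split closes the argument.
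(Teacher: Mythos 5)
Your proof is correct and follows essentially the same route as the paper's: simulate $\mathcal D_n^c$ as a witness, take the least relevant-change position $i^*$ occurring infinitely often, and use the monotone shrinking of $S_{i^*}$ and the monotone bounded growth of $P_{i^*}$ to show that either rejecting or defying changes recur infinitely often, forcing an odd dominant co-priority ($2i^*-1$ or $2i^*+1$). The one shaky side claim --- that a shrinking fragment at $i^*$ cannot have co-priority $2i^*$ (a multi-transition fragment could in principle contain an accepting event at position $i^*$ on one transition and shrink $S_{i^*}$ on another) --- is harmless, since you separately show shrinking fragments occur only finitely often, which is all the liminf computation needs.
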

\vspace*{-2ex}

\begin{proof}
In the language game for each $S \subseteq Q$, the verifier can simply use the strategy to monitor the state that the monitor DPA $\mathcal D_n^c$ from Lemma \ref{parityexp} would be in.
He then wins by playing $(c,\varepsilon,N)$ when the automaton is in state $N$.

To see that he wins the game, we consider the run of $\mathcal D_n^c$ on the word defined by the play $(v_0,u_0,c) (c,\varepsilon,N_1) (N_1,u_1,c) (c,\varepsilon,N_2) \ldots $, which refers to the word $w=u_0u_1u_2\ldots$.

The run $\rho$ of $\mathcal D_n^c$ on $w$ can be decomposed into the finite segments $\rho(N_i,u_i,N_{i+1})$ for all $i \geq 0$.
For all $i \geq 1$, their relevant changes are rejecting, shrinking, defying, or purifying.

Clearly, there is a minimal one $i_{\min}$ that occurs infinitely often.
Consequently, no co-priority smaller than $2\im-1$ can occur infinitely many times in $\rho$.

We can now distinguish four cases.
\emph{\bf First}, assume that there are infinitely many rejecting relevant changes $\im$.
Then the co-priority $2\im-1$ occurs infinitely often in $\rho$, and the $\omega$ word $w$ is~rejected.

\emph{\bf Second}, assume that finitely many of the relevant changes with change priority $\im$ are rejecting, but infinitely many are shrinking.
Then we can choose a position in the play where henceforth no relevant change with priority $<\im$, and no rejecting relevant change with priority $\im$ occurs.
Consequently, the set of states at position $\im$ of $N_i$ would henceforth shrink monotonously with growing $i$, and would infinitely often shrink strictly. (contradiction)

\emph{\bf Third}, assume that finitely many of the relevant changes with change priority $\im$ are rejecting or shrinking, but infinitely many are defying.
Then we can choose a position in the play where henceforth no relevant change with change priority $<\im$, and no rejecting or shrinking relevant change with change priority $\im$ occurs.
From this time onwards, no co-priority $\leq 2\im$ can occur on any segment of the run, while the co-priority $2\im +1$ occurs infinitely often.

\emph{\bf Finally}, assume that finitely many of the relevant changes with change priority $\im$ are rejecting, shrinking, or defying.
Then we can choose a position in the play where henceforth no relevant change $j$ with $j<\im$, and no rejecting, shrinking, or defying relevant change with relevant change $\im$ occurs.
Consequently, the set of pure states at position $\im$ of $N_i$ would henceforth grow monotonously with growing $i$, and would infinitely often grow strictly. (contradiction)
\end{proof}

To establish that the minimal size of a Rabin automaton that recognises the complement language of $\mathcal D_n^c$ cannot be significantly smaller than $D_n^c$, we will show that the verifier needs all edges to win this game.

For this, we recall the structure of the strategy that the verifier applies: he would use $\mathcal D_n^c$ as a witness automaton, moving to the vertex that represents the state  $\mathcal D_n^c$ would be in upon reading the finite word produced so far.
If one of his outgoing edges is removed, then there is one such state, say $N$, he cannot respond to properly.
Instead, he would have to go to a different state $N'$.

We show that, irrespective of the state $N$ that becomes unreachable and $N' \neq N$ chosen, the spoiler can produce a word $u\in {\Sigma_n^c}^+$ such that $(N',u,c)$ is a transition in $G$ and $\rho(N,u,N)$ has even co-priority.

If the spoiler has such an option, then \emph{she} can use $\mathcal D_n^c$ as a witness automaton:
initially, she selects an edge $(v_0,u,c)$ such that $\rho(N_0,u,N)$ holds; %, and
henceforth she chooses, whenever she is in a vertex $N$, an edge $(N',u,c)$ such that $\rho(N,u,N)$ holds, % --
returning the run to $N$ with dominating even co-priority.
Thus, she can make sure that the constructed word is accepted.

\begin{lemma}
\label{lem:useOfSpikes}
Let $N$ and $N'$ be two different spiked states of $\mathcal D_n^c$ with $\reach(N)=\reach(N')$.
Then there is a word $u\in {\Sigma_n^c}^+$ such that the lowest co-priority occurring in $\rho(N,u,N)$ is even and $\rho(N',u,N)$ has a rejecting, shrinking, defying, or purifying relevant change.
\end{lemma}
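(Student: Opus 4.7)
The plan is to adapt the four-case construction of Lemma~\ref{lem:rwillwin} and add one new ingredient -- the spike -- that forces an even minimum co-priority on the self-loop $\rho(N,u,N)$. I will write $u = \sigma w$, where $\sigma$ is a single carefully chosen distinguishing letter and $w$ is a restoration word that drives both runs back to $N$.

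First, I would let $i$ be the smallest position at which the LIRs $N$ and $N'$ disagree, and collect the positions $<i$ in a set $P$ on which they coincide. I would then split on how the triples at position $i$ differ -- either $S_i' \smallsetminus S_i \neq \emptyset$, $S_i' = S_i$ with $P_i \smallsetminus P_i' \neq \emptyset$, $S_i \supsetneq S_i'$, or $S_i' = S_i$ with $P_i' \supsetneq P_i$ -- and pick a witness exactly as in Lemma~\ref{lem:rwillwin}. The letter $\sigma$ is then designed so that on the witness the assigned priorities create a rejecting, shrinking, defying, or purifying relevant change at position $i$ in $\rho(N',\sigma,\cdot)$, while all positions in $P$ remain neither accepting nor rejecting in both $\rho(N,\sigma,\cdot)$ and $\rho(N',\sigma,\cdot)$. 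As there, this yields $\Delta(N,\sigma) = \Delta(N',\sigma)$, synchronising the two runs from that point on.

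The new ingredient exploits the spike: $N$'s last triple is $(\{q\},2,\{q\})$ for some $q \in \reach(N) = \reach(N')$. I augment $\sigma$ by adding the priority $2$ to $\sigma(q,q)$. This places $(q,\sigma,q)$ in $A_2$, so the NHT transition spawns an accepting child of the spike labelled $\{q\}$, which equals the spike's own label; hence position $m$ becomes accepting in $\rho(N,\sigma,\cdot)$, contributing the even co-priority $2m$. Crucially, the added priority $2$ only decorates the self-loop on $q$ and it lies in $N_{c'}$ for every level $c' \geq 2$ but in no $A_{c'}$ with $c'>2$, so it creates no undesired event at any position $<m$. I would then pick $w$ to rebuild $N$ from the common post-$\sigma$ state while firing no event at positions $<m$; this can be done in the usual way by letting unwanted states decay through deep rejecting priorities while keeping the labels of the upper nodes stable.

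Combining the two, $\rho(N,u,N)$ has its unique accepting event at the spike position $m$ and no event at any position $<m$, so its minimum co-priority equals the even number $2m$; meanwhile $\rho(N',u,N)$ inherits the engineered relevant change at position $i$, making $(N',u,c)$ a valid edge of the language game. The main obstacle is the boundary case $i = m$, where the distinguishing position is itself the spike: here I would fuse the two constructions in a single letter, using the purifying or defying sub-case of the analysis (both are compatible with $S_i = S_i'$ and differ only in how $P_i$ and $P_i'$ relate) and verifying that the self-loop priority on $q$ still renders position $m$ accepting in $\rho(N,\sigma,\cdot)$ without destroying the relevant-change witness on the $N'$ side.
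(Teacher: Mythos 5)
Your proposal has the right skeleton (one distinguishing letter followed by a restoration word, four cases according to how the triples at the first differing position $\im$ relate), but the mechanism you use to make the minimal co-priority of $\rho(N,u,N)$ even does not work, and it is here that your argument genuinely diverges from what is needed. You import the letter constructions of Lemma~\ref{lem:rwillwin} unchanged, but in that lemma the requirement on the loop side is only that \emph{no node is accepting}, and indeed in its first case the distinguishing position is made \emph{rejecting} from $\T$. Transplanted here, that letter would give $\rho(N,u,N)$ a transition of odd co-priority $2\im-1$, which -- being the minimum over all positions where an accepting or rejecting event occurs -- would dominate the whole loop and violate the conclusion. The cases have to be re-oriented (the witness state taken from $N$, e.g.\ $s\in S_\im\smallsetminus S_\im'$, rather than from $N'$), and the distinguishing letter itself must make position $\im$ \emph{accepting} when read from $N$, yielding co-priority exactly $2\im$; the restoration letters then only produce events at positions $>\im$, with co-priorities $\geq 2\im+1$, so the minimum over the loop is the even number $2\im$. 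Your alternative -- drawing the even co-priority from the spike by adding priority $2$ to $\sigma(q,q)$ and making position $m$ accepting -- cannot succeed: the co-priority of a fragment is the \emph{smallest} position at which anything accepting or rejecting happens, and both the distinguishing letter and the rebuilding of any node destroyed at a position $j<m$ contribute co-priorities $\leq 2j-1<2m$. Your claim that ``no event at any position $<m$'' occurs on the $N$-side is incompatible with simultaneously engineering a relevant change at position $\im<m$ on the $N'$-side with $\Delta(N,\sigma)=\Delta(N',\sigma)$ and then restoring $N$.

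The second gap concerns the role of spiked-ness itself. It is not needed to create an accepting event; it is needed only in the fourth case ($S_\im=S_\im'$ and $P_\im'\subsetneq P_\im$), where one cannot make position $\im$ accepting from $N$ without also making it accepting from $N'$. There the fix is to make position $\im+1$ accepting instead (co-priority $2\im+2$ from $N$, while from $N'$ the change at $\im$ is purifying with co-priority $>2\im$) -- and this requires position $\im+1$ to exist. Spiked-ness delivers exactly this: $P_\im'\subsetneq P_\im$ forces $|P_\im|\geq 2$, so $\im$ cannot be the last (singleton) triple and $\im<m$. Your ``boundary case $i=m$'', which you propose to patch by fusing constructions, is thus precisely the situation that the restriction to spiked states excludes; recognising that, rather than decorating the spike with a self-loop of priority $2$, is the missing idea.
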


\begin{proof}
Let $N = \big\{(S_i,c_i,P_i)\big\}_{i \leq m}$ and $N' = \big\{(S_i',c_i',P_i')\big\}_{i \leq m'}$.
As $N \neq N'$, there is%
\footnote{Note that a spiked state $N'$ cannot simply be longer than a spiked state $N$ with $\reach(N)=\reach(N')$ (or vice versa): assuming that $N$ is an initial sequence of $N'$.
Then the rules (1), (2), (3), and (5) imply that $S_{m+1}'$ must be disjoint with all $S_i$ for $i\leq m$, which contradicts $\reach(N)=\reach(N')$.}
a minimal $\im \leq \min\{m,m'\}$ such that $(S_\im,c_\im,P_\im) \neq (S_\im',c_\im',P_\im')$.

We will construct a word $u$ such that
\begin{itemize}
 \item $\rho(N,u,N)$ and $\rho(N',u,N)$ are fragments of runs,
 \item the minimal co-priority of $\rho(N,u,N)$, is even, and
 \item \im will be the relevant change in $\rho(N',u,N)$; it will be rejecting, shrinking, defying, or purifying.
\end{itemize}

We distinguish four cases.
Let us \emph{\bf first} assume that there is an $s\in S_\im \smallsetminus S_\im'$.
In this case, we choose such an $s$, and select the first letter $\sigma_\im$ of $u$ such that
\begin{itemize}
\item $c_\im\in\sigma_\im(s,s')$ for all $s' \in S_\im$,
\item $c_i-1 \in \sigma_\im(s',s'')$ for all $i < \im$, $s' \in P_i$, and $s'' \in S_i$, and
\item if $c$ is odd%
      \footnote{If the highest priority $c$ of the defining NPA $\mathcal P^c_n$ is odd, then $S_1$ might be a strict subset of $\reach(N)$.
      This part is then an easy way to make sure that all states in $\reach(N)$ remain reachable.
      If $c$ is even, then we have a tree on the lowest level, $S_1=\reach(N)$.},
      $c \in \sigma_\im(s',s'')$ for all $s',s'' \in  \reach(N)$.
\end{itemize}
No further priority is included in any set $\sigma_\im(s',s'')$ for $s',s'' \in Q$ and $s'' \in Q^\top$.

Starting with $\sigma$ is the central step.
The transition from $N$ reading $\sigma$ has co-priority $2\im$,
the transition from $N'$ reading $\sigma$ has co-priority $2\im-1$.
Note that during the transition, all nodes in the history trees underlying $N$ and $N'$ that refer to a position $<\im$ are the same. They are also stable and non-accepting during this transition.
However, while the node the position $\im$ of $N$ refers to is accepting, the node position $\im$ of $N'$ refers to is not stable.
(Note that $N$ and $N'$ could refer to the same underlying tree.)
The resulting state is the same for $N$ and $N'$.

The next letters are to rebuild $N$.
For all $i = \im + 1$ to $m$, we append a further letter $\sigma_i$ to our partially constructed word $u$.
We choose a state $s \in P_i$ and define $\sigma_i$ as follows:
\begin{itemize}
\item $c_i \in \sigma_i(s,s') $ for all $s' \in S_i$,
\item $c_j{-}1 \in \sigma_i(s',s'') $ for all $j{<}i$, $s' {\in} P_j$ and $s'' {\in} S_j$, and
\item if $c$ is odd, $c \in \sigma_\im(s',s'')$ for all $s',s'' \in  \reach(N)$.
\end{itemize}
No further priority is included in any set $\sigma_\im(s',s'')$ for $s',s'' \in Q$ and $s'' \in Q^\top$.

Clearly, reading $i$ from a state that agrees with $N$ on all positions $< i$ before reading $\sigma_i$, the resulting state will agree on all positions $\leq i$ with $N$ after this transition.

The transition has a co-priority $\geq 2i-1 > 2\im$.
Thus, the word $u = \sigma_\im \sigma_{\im+1} \sigma_{\im+2} \ldots \sigma_m$ has the required properties; in particular $\rho(N',u,N)$ has rejecting relevant change $i$.
\smallskip

In the remaining cases we have $S_\im \subseteq S_\im'$.
Note that this implies $c_\im = c_\im'$.

The \emph{\bf second} case is $S_\im = S_\im'$ and there is a state $s \in P_\im' \smallsetminus P_\im$.
In this case, we fix such an $s$ and start our word $u$ with the letter $\sigma_\im$ that satisfies
\begin{itemize}
\item $c_\im-1 \in \sigma_\im(s,s')$ for all $s' \in S_\im$,
\item $c_i-1 \in \sigma_\im(s',s'')$ for all $i < \im$, $s' \in P_i$, and $s'' \in S_i$, and,
\item if $c$ is odd, $c \in \sigma_\im(s',s'')$ for all $s',s'' \in  \reach(N)$.
\end{itemize}
No further priority is included in any set $\sigma_\im(s',s'')$ for $s',s'' \in Q$ and $s'' \in Q^\top$.

Starting $u$ with this letter $\sigma_\im$ is again the central step.
The transition from $N$ has co-priority $2\im$, while the transition from $N'$ has co-priority $2\im+1$.
We can now continue $u$ in the same manner as above and use $u = \sigma_\im \sigma_{\im+1} \sigma_{\im+2} \ldots \sigma_m$, and $u$ will again satisfy the constraints; in particular $\rho(N',u,N)$ has defying relevant change $i$.

In the \emph{\bf third} case, $S_\im \subsetneq S_\im'$, we fix an $s\in S_\im$ and start our word $u$ with the letter $\sigma_\im$ that satisfies
\begin{itemize}
\item $c_\im \in \sigma_\im(s,s')$ for all $s' \in S_\im$,
\item $c_i-1 \in \sigma_\im(s',s'')$ for all $i < \im$, $s' \in P_i$, and $s'' \in S_i$, and,
\item if $c$ is odd, $c \in \sigma_\im(s',s'')$ for all $s',s'' \in  \reach(N)$.
\end{itemize}
No further priority is included in any set $\sigma_\im(s',s'')$ for $s',s'' \in Q$ and $s'' \in Q^\top$.

Starting $u$ with this letter $\sigma_\im$ is again the central step.
The transition from $N$ or $N'$ reading $\sigma$ has co-priority $2\im$.
We can again continue $u$ in the same manner as above and use $u = \sigma_\im \sigma_{\im+1} \sigma_{\im+2} \ldots \sigma_m$, and $u$ will again satisfy the constraints; in particular $\rho(N',u,N)$ has shrinking relevant change $i$.

Finally, in the \emph{\bf fourth} case we have $S_\im = S_\im'$ and $P_\im' \subsetneq P_\im$.
We first note that this implies $|P_\im| \geq 2$.
The restriction to spiked states then provides $\im < m$.
We can therefore refer to position $\im+1$ of $N$.

We choose an $s \in P_{\im+1}$ and start our word $u$ with the letter $\sigma_\im$ that satisfies
\begin{itemize}
\item $c_{\im+1} \in \sigma_\im(s,s')$ for all $s' \in S_\im$,
\item $c_i-1 \in \sigma_\im(s',s'')$ for all $i \leq \im$, $s' \in P_i'$, and $s'' \in S_i$, and,
\item if $c$ is odd, $c \in \sigma_\im(s',s'')$ for all $s',s'' \in  \reach(N)$.
\end{itemize}
No further priority is included in any set $\sigma_\im(s',s'')$ for $s',s'' \in Q$ and $s'' \in Q^\top$.

Then the effect on $N$ is obvious: the transition from $N$ reading $\sigma$ has co-priority $2\im+2$.
Starting from $N'$, the same state is reached.
The co-priority is $2\im+2$ if $N'$ has a position $(S_{\im+1}',c_{\im+1}',P_{\im+1}')$ with $s\in S_{\im+1}$ and $c_{\im+1}'=c_{\im+1}$, and $2\im+1$ otherwise.
(Note that, for the case that $s \in P_\im'$, this would imply $c_\im = c_{\im+1} + 2$.)

We can again continue $u$ in the same manner as above, although this results in the slightly shorter word $u = \sigma_{\im+1} \sigma_{\im+2}  \sigma_{\im+3}\ldots \sigma_m$.
The word $u$ will again satisfy the constraints; in particular $\rho(N',u,N)$ has purifying relevant change $i$ and $\rho(N,u,N)$ has co-priority $2\im+2$.
\end{proof}

\begin{lemma}
\label{lem:lose}
If any outgoing edge is removed from the verifier's centre vertex in any of these games, then the spoiler wins the language game.
\end{lemma}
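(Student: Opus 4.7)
The plan is to exhibit a winning strategy for the spoiler by chaining applications of Lemma~\ref{lem:useOfSpikes}, so that the $\mathcal D_n^c$-run on the constructed word has even $\liminf$ co-priority and is therefore \emph{not} in the target complement language. Suppose the edge $(c,\varepsilon,N^*)$ has been removed for some working vertex $N^*$, and write $S=\reach(N^*)$. First the spoiler opens the play with any edge $(v_0,u_0,c)$ such that $\rho(N_0,u_0,N^*)$; this exists because $N^*$ is a reachable spiked state with reach~$S$, so the game graph contains such an edge. Since the removed edge is unavailable, the verifier must respond with $(c,\varepsilon,N')$ for some spiked $N'\in W\smallsetminus\{N^*\}$; because every vertex of $W$ shares the reach set $S$, Lemma~\ref{lem:useOfSpikes} applies to $N^*$ and $N'$.

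Next, for each $N'\in W\smallsetminus\{N^*\}$, I fix once and for all a word $u_{N'}$ provided by Lemma~\ref{lem:useOfSpikes}: the run fragment $\rho(N',u_{N'},N^*)$ has a rejecting, shrinking, defying, or purifying relevant change (making $(N',u_{N'},c)$ a legal edge of the game), while the lowest co-priority of $\rho(N^*,u_{N'},N^*)$ is some even number $m_{N'}$. The spoiler's strategy is then to play $(N',u_{N'},c)$ whenever she sits at $N'$. By construction, the actual run of $\mathcal D_n^c$ on the resulting word $w=u_0u_1u_2\cdots$ revisits $N^*$ between every two consecutive centre visits, and each round after the first contributes a run fragment whose minimum co-priority is even.

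The main obstacle is controlling the $\liminf$ of co-priorities of the whole run, since the verifier chooses the sequence of $N'$s and different $u_{N'}$ may have different minimum co-priorities. I would argue as follows: because $W$ is finite, the set $\mathcal N_\infty$ of vertices played by the verifier infinitely often is nonempty, and the co-priorities occurring infinitely often in the run are exactly those occurring in $u_{N'}$ for some $N'\in\mathcal N_\infty$, because the $u_0$ prefix and segments attributable to $N'\notin\mathcal N_\infty$ together contribute only finitely many positions and cannot influence the $\liminf$. Since all co-priorities inside a single $u_{N'}$ are bounded below by $m_{N'}$, and the finite minimum $\min_{N'\in\mathcal N_\infty}m_{N'}$ is attained by some $u_{N'^*}$ that appears infinitely often, the $\liminf$ equals this minimum and is therefore even. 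Hence $w\in L(\mathcal D_n^c)$, so $w$ is not in the target complement language, and the spoiler wins the modified game.
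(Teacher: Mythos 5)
Your proof is correct and follows essentially the same route as the paper's: the spoiler steers to the now-unreachable spiked state $N^*$ and then repeatedly plays the words from Lemma~\ref{lem:useOfSpikes}, so every segment of the actual $\mathcal D_n^c$-run returns to $N^*$ with even minimal co-priority. Your closing argument (finitely many vertices, hence some even segment-minimum is attained infinitely often and dominates the $\liminf$) just makes explicit a step the paper leaves implicit in ``thus, the co-priority of the overall run is even.''
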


\begin{proof}
If the spoiler has such an option, then \emph{she} can use $\mathcal D_n^c$ as a witness automaton.
Let $N$ be the spiked state, to whom the outgoing edge from the centre is removed.

Initially, the spoiler plays a word $u_0$ with $\rho(N_0,u_0,N)$ by choosing the edge $(v_0,u_0,c)$ from the initial vertex, such that $N$ is reached from the initial state of $\mathcal D_n^c$.
Henceforth she plays, whenever she is in a vertex $N'$, the word $u$ from the previous lemma by choosing the edge $(N',u,c)$.
This way, the two players construct a play
$(v_0,u_0,c) (c,\varepsilon,N_1) (N_1,u_1,c) (c,\varepsilon,N_2) (N_2,u_2,c) (c,\varepsilon,N_3)\linebreak(N_3,u_3,c) (c,\varepsilon,N_4) (N_4,u_4,c) \ldots$.

For every $i \geq 1$, the segment $\rho(N_i,u_i,N_{i+1})$ of the run of $\mathcal D_n^c$ on the word $w=u_0u_1u_2u_3u_4\ldots$ has even minimal co-priority.
Thus, the co-priority of the overall run is even.
\end{proof}

Together, the Lemmata \ref{lem:win}, \ref{lem:lose}, and \ref{lem:differentSets} provide:

\begin{theorem}
Every Rabin automaton $\mathcal R= \big(S,\Sigma_n^c,s_0,\delta, R\big)$ that recognises the complement language of $\mathcal P_n^c$ must, for each non-empty subset%
\footnote{For each $q{\in}Q$ there is only a single state $N$ of $\mathcal D_n^c$ with $\reach(N)=\{q\}$.}
$S\subseteq Q$ of the states $Q$ of $\mathcal P_n^c$, have at least as many states $s$ with $\reach(s)=S$ as $\mathcal D_n^c$ has spiked states $N$ with $\reach(N) = S$.
\end{theorem}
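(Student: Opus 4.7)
The plan is to derive a contradiction from three pieces already in hand: Lemma \ref{lem:win}, which says the verifier wins each language game; Lemma \ref{lem:lose}, which says removing any centre-edge from the verifier flips the game to the spoiler; and Lemma \ref{lem:differentSets}, which lets us associate a well-defined $\reach(s)$ with every state $s$ of a deterministic automaton for $L(\mathcal P_n^c)$ or its complement. Fix a non-empty $S\subseteq Q$ and suppose, for contradiction, that $\mathcal R$ has strictly fewer states $s$ with $\reach(s)=S$ than $\mathcal D_n^c$ has spiked states $N$ with $\reach(N)=S$; write $r_S$ for the former and let $|W|$ denote the latter -- the cardinality of the working vertex set of the language game associated with $S$.

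In that game, the verifier can use $\mathcal R$ as his witness automaton, exactly as in the template of Theorem \ref{theo:Lgame} and Corollary \ref{cor:Lgame}: he tracks the state $s$ of $\mathcal R$ reached on the word read so far, and responds at the centre according to his (memoryless) winning choice in the product game. The key observation is that every visit to the centre is preceded by reading a word whose reach is exactly $S$ -- this is built into the edges $(v_0,u,c)$ and $(N',u,c)$ of the game for $S$. By Lemma \ref{lem:differentSets}, the state of $\mathcal R$ at every such moment must satisfy $\reach(s)=S$; plays heading into $\top$ lie in $L(\mathcal P_n^c)$ and so outside the complement language the verifier is trying to construct, so a strategy backed by $\mathcal R$ would never choose them. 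The verifier's centre decision is therefore a function of at most $r_S$ memory values.

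The per-reach-set refinement of Corollary \ref{cor:Lgame} then lets the verifier commit in advance to at most $r_S$ of the outgoing edges of the centre without losing. Since by assumption $r_S<|W|$, at least one edge $(c,\varepsilon,N)$ to a spiked working vertex $N$ is left unused, and Lemma \ref{lem:lose} immediately hands the reduced game to the spoiler. This contradicts Lemma \ref{lem:win}, and running the same argument independently for every non-empty $S\subseteq Q$ establishes the claimed bound on $\mathcal R$.

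The main obstacle, as I see it, is not the combinatorial counting but the bookkeeping that turns the global memory bound of Theorem \ref{theo:Lgame} into a per-reach-set bound: one has to rule out that states of $\mathcal R$ with a reachability set other than $S$, or with $\top\in\reach$, ever serve as distinguishing memory when playing the game for $S$. This is precisely what Lemma \ref{lem:differentSets} together with the spoiler's interest in avoiding $\top$ is there to supply, but it is the one step where the argument must be spelled out carefully rather than just invoked from the template of \cite{Colcombet+Zdanowski/09/Buchi}.
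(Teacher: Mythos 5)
Your proposal is correct and follows essentially the same route as the paper, which derives the theorem directly from Lemmata \ref{lem:win}, \ref{lem:lose}, and \ref{lem:differentSets} via the witness-automaton/memory argument of Theorem \ref{theo:Lgame} and Corollary \ref{cor:Lgame}. Your explicit justification of the per-reach-set refinement (that only states $s$ of $\mathcal R$ with $\reach(s)=S$ can occur as memory at centre visits in the game for $S$, since all words read up to a centre visit have reach $S$ and never reach $\top$) is exactly the step the paper leaves implicit.
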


\begin{corollary}
Every Rabin automaton %$\mathcal R= \big(S,\Sigma_n^c,s_0,\delta, R\big)$
that recognises the complement language of $\mathcal P_n^c$ must contain at least as many states as $\mathcal D_n^c$ has spiked states.
\end{corollary}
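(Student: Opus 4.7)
The plan is to obtain this corollary as an immediate consequence of the preceding Theorem by aggregating the per-$S$ lower bound over all non-empty subsets $S\subseteq Q$. The first step is to note that Lemma \ref{lem:differentSets} ensures every reachable state $s$ of $\mathcal R$ carries a well-defined reachability set $\reach(s)\subseteq Q$: any two words $u,v$ that drive $\mathcal R$ into the same state $s$ satisfy $\reach(u)=\reach(v)$, since the alternative clause of the lemma involves $\top$ and refers to immediate acceptance, which does not give rise to a distinct reachable state that tracks a finite reach. Consequently, the reachable state space of $\mathcal R$ decomposes as the disjoint union $\bigcup_{\emptyset\neq S\subseteq Q}\{s\in\mathcal R:\reach(s)=S\}$, and the same decomposition applies to the spiked states of $\mathcal D_n^c$.

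Next, I would simply invoke the preceding Theorem once for each non-empty $S\subseteq Q$ to obtain
\[
|\{s\in\mathcal R:\reach(s)=S\}|\;\geq\;|\{N\in\mathcal D_n^c:N\text{ is spiked and }\reach(N)=S\}|.
\]
Summing these inequalities over all non-empty $S$ and using the disjointness of the reach classes on both sides yields that the total number of reachable states of $\mathcal R$ is at least the total number of spiked states of $\mathcal D_n^c$, which is the claim, since unreachable states only increase the state count of $\mathcal R$.

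I do not expect any substantive obstacle — the heavy lifting has already been done by the Theorem and by Lemma \ref{lem:differentSets}. The only minor point worth flagging is the singleton case $S=\{q\}$: as the Theorem's footnote observes, $\mathcal D_n^c$ has a unique state with $\reach(N)=\{q\}$, and it is spiked by construction (its only triple is $(\{q\},2,\{q\})$), so the per-$S$ bound in this case merely asserts that $\mathcal R$ contains at least one reachable state with reach $\{q\}$, which is immediate because $\mathcal R$ correctly recognises words whose runs in $\mathcal P_n^c$ eventually have $\{q\}$ as their reach set.
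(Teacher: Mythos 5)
Your proposal is correct and matches the paper's (implicit) argument exactly: the corollary is obtained from the preceding theorem by summing the per-$S$ lower bounds over all non-empty $S\subseteq Q$, with Lemma~\ref{lem:differentSets} guaranteeing that the reach classes partition the (reachable) states of $\mathcal R$ so the inequalities add up disjointly. The additional remarks about unreachable states and the singleton case are fine but not needed beyond what the theorem already provides.
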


By dualisation and the observation that parity automata are special Streett automata we simply get:

\begin{corollary}
A deterministic Streett or parity automaton that recognises the language of $\mathcal P_n^c$ must have at least as many states as $\mathcal D_n^c$ has spiked states.
\end{corollary}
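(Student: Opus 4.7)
The plan is to derive this corollary directly from the previous corollary by a standard duality argument, so very little new work is required. I would carry out the proof in three short steps.

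First, I would recall the duality between Rabin and Streett acceptance on transitions: given a deterministic Streett automaton $\mathcal{S}=(S,\Sigma,s_0,\delta,\{(G_i,B_i)\mid i\in J\})$, the automaton obtained by keeping the same transition structure but reinterpreting each pair $(G_i,B_i)$ as a Rabin pair $(A_i,R_i):=(B_i,G_i)$ accepts exactly the complement of $\mathcal{L}(\mathcal{S})$. This is immediate from the definitions in the Preliminaries: a run is Streett-accepting iff for every $i$ either $G_i$ occurs infinitely often or $B_i$ does not, which is the logical negation of the Rabin condition ``there exists $i$ with $A_i$ hit infinitely often and $R_i$ hit only finitely often'' applied to the switched pairs. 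Crucially, this dualisation preserves the state space.

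Second, I would apply this to our situation. Suppose some deterministic Streett automaton $\mathcal{S}$ with state set $S$ recognises $\mathcal{L}(\mathcal{P}_n^c)$. Dualising yields a deterministic Rabin automaton with state set $S$ whose language is the complement of $\mathcal{L}(\mathcal{P}_n^c)$. By the preceding corollary, such a Rabin automaton must have at least as many states as $\mathcal{D}_n^c$ has spiked states. Hence $|S|$ is at least this lower bound. Moreover, as already noted in the Preliminaries (every parity condition is both a Rabin and a Streett chain condition), any deterministic parity automaton for $\mathcal{L}(\mathcal{P}_n^c)$ is in particular a deterministic Streett automaton for the same language, so the bound transfers immediately to parity automata.

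There is essentially no obstacle here: the corollary is a direct reformulation of the preceding one via transition-based Rabin/Streett duality, together with the observation that parity automata are a special case of Streett automata. The only minor care needed is to make sure duality is applied to \emph{transition-based} acceptance (which it is throughout this paper) so that no extra blow-up in the number of states is incurred in the translation.
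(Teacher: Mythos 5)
Your proposal is correct and follows exactly the paper's argument: the paper also obtains this corollary "by dualisation and the observation that parity automata are special Streett automata," i.e.\ by swapping each Streett pair $(G_i,B_i)$ into a Rabin pair on the same transition structure to get a Rabin automaton for the complement language, and then invoking the preceding corollary. Your added care about transition-based acceptance preserving the state space is a sensible (and accurate) remark, but no new content is needed.
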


The restriction to spiked states is minor -- using the estimation of 
Lemma \ref{lem:spiked}, we get:

\begin{theorem}
$\mathcal D_n^c$ has less than $1.5$ times as many states as the smallest deterministic Streett (or parity) automaton that recognises the language of $\mathcal P_n^c$.
\end{theorem}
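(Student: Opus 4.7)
The plan is to combine the two immediately preceding facts: the corollary that any deterministic Streett (or parity) automaton for $L(\mathcal P_n^c)$ has at least as many states as $\mathcal D_n^c$ has spiked states, and Lemma \ref{lem:spiked}, which says that $\mathcal D_n^c$ has more than twice as many spiked as unspiked states. This gives a direct arithmetic bound on the ratio.

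Concretely, I would let $s$ denote the number of spiked states of $\mathcal D_n^c$ and $u$ the number of unspiked states, so that $|\mathcal D_n^c| = s + u$. By Lemma \ref{lem:spiked} we have $s > 2u$, hence $u < s/2$ and
\[
|\mathcal D_n^c| = s + u < s + \tfrac{s}{2} = \tfrac{3}{2}\,s.
\]
On the other hand, the previous corollary tells us that any deterministic Streett (and therefore any deterministic parity) automaton recognising $L(\mathcal P_n^c)$ has at least $s$ states. Writing $m$ for the size of the smallest such automaton, we get $m \geq s$, and therefore
\[
\frac{|\mathcal D_n^c|}{m} \;\leq\; \frac{|\mathcal D_n^c|}{s} \;<\; \tfrac{3}{2},
\]
which is exactly the claim.

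There is essentially no obstacle beyond invoking these two preceding results; the only thing to keep in mind is that the corollary applies uniformly (not just per reachability class $S \subseteq Q$), so that the lower bound $m \geq s$ really does hold for the smallest automaton overall, and that Lemma \ref{lem:spiked} is a global count over all of $\mathcal D_n^c$ rather than a per-class statement. Both are the case as stated, so the combination yields the $1.5$ factor immediately.
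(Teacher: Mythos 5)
Your proposal is correct and is exactly the argument the paper intends: the theorem is stated as an immediate consequence of the preceding corollary (lower bound of the number of spiked states) together with Lemma \ref{lem:spiked}, and your arithmetic $s>2u \Rightarrow s+u<\tfrac{3}{2}s\leq\tfrac{3}{2}m$ is the whole content of that step.
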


State sizes for two parameters are usually not crisp to represent. But for the simple base cases, B\"uchi and one pair Rabin automata, we get very nice results: it establishes that the known upper bound for determinising B\"uchi to parity automata \cite{Schewe/09/determinise} are tight and Piterman's algorithm for it \cite{Piterman/07/Parity} is \textbf{optimal} modulo a factor of $3n$, where $2n$ stem from the fact that \cite{Piterman/07/Parity} uses state based acceptance.
With Lemma \ref{parityref} we get:

\begin{corollary}
The determinisation of B\"uchi automata to Streett or parity automata leads to $\theta(n!(n-1)!)$ states, and
the determinisation of one-pair Rabin automata to Streett or parity automata leads to $\theta(n!^2)$ states. 
\end{corollary}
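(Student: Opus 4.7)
The plan is to combine the upper bound of Lemma~\ref{parityref} with the lower-bound theorem immediately preceding this corollary, instantiated at the two base cases $c=2$ (B\"uchi) and $c=3$ (one-pair Rabin). For both of these values of $c$ one has $e=2\lfloor 0.5c\rfloor = 2$, so the NHT construction reduces to an RHT-like construction, and $\mathcal P_n^c$ specialises to a \emph{full} automaton of the appropriate type. Because every nondeterministic B\"uchi (resp.\ one-pair Rabin) automaton on $n$ states is an alphabet restriction of $\mathcal P_n^2$ (resp.\ $\mathcal P_n^3$), tightness for these two full automata transfers to the whole class.

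The upper bounds $O(n!(n-1)!)$ for B\"uchi and $O(n!^2)$ for one-pair Rabin come directly from Lemma~\ref{parityref}. For the matching lower bounds, I would instantiate the preceding (unlabelled) theorem: any deterministic Streett or parity automaton recognising $L(\mathcal P_n^c)$ has at least as many states as $\mathcal D_n^c$ has spiked states, which by Lemma~\ref{lem:spiked} exceeds $\tfrac{2}{3}\,\#\mathcal D_n^c$. It therefore suffices to prove $\#\mathcal D_n^c = \Omega(n!(n-1)!)$ for $c=2$ and $\Omega(n!^2)$ for $c=3$.

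The hard part is this reachability-based lower bound on $\#\mathcal D_n^c$. The plan is to mirror, in reverse, the counting used in the proof of Lemma~\ref{parityref}: for each pair consisting of an LIR ordering $v_1,\ldots,v_m$ on an RHT (or LIR-NHT) with $m=n{+}1$ non-Rabin-root nodes together with an onto labelling $Q\twoheadrightarrow\{v_1,\ldots,v_m\}$, I would construct a word over $\Sigma_n^c$ whose run from $d_0$ in $\mathcal D_n^c$ lands exactly in the corresponding LIR-NHT. Thanks to the richness of $\Sigma_n^c$ this can be done letter by letter, using the same local-effect tricks that underlie Lemma~\ref{lem:useOfSpikes}: each letter is designed to split off, merge, or reorder precisely the intended nodes while keeping the remainder of the structure stable. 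Built inductively along the LIR ordering, this yields $n!\cdot(n-1)!$ distinct reachable states in the B\"uchi case and $n!\cdot n!$ in the one-pair Rabin case, and Lemma~\ref{lem:differentSets} prevents any double-counting across different reachability sets. Combined with the factor-$1.5$ bound, this delivers $\Theta(n!(n-1)!)$ for B\"uchi and $\Theta(n!^2)$ for one-pair Rabin, as claimed.
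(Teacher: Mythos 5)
Your proposal is correct and follows essentially the same route as the paper: the upper bounds come from Lemma~\ref{parityref}, and the lower bounds from the closing theorems (every deterministic Streett or parity automaton for $L(\mathcal P_n^c)$ has at least as many states as $\mathcal D_n^c$ has spiked states, which by Lemma~\ref{lem:spiked} is more than $\frac{2}{3}$ of all states), instantiated at $c=2$ and $c=3$. The one step you make explicit that the paper leaves implicit is that the $\Theta(n!(n-1)!)$ resp.\ $\Theta(n!^2)$ many LIR-states counted in Lemma~\ref{parityref} must actually be reachable in $\mathcal D_n^c$ for the lower bound to bite; your plan to establish this by building words letter by letter over the full alphabet $\Sigma_n^c$ is the right way to discharge that point.
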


\end{document}